\newcommand\com[1]{}
\newcommand\killing{{\mathfrak{k}}}
\newcommand\generators{{\mathfrak{k}_{\mathcal{H}}}}
\newcommand\generatorsdeg{{\mathfrak{k}_{\mathcal{H}}^{\mathrm{deg}}}}
\newcommand\E{\mathcal{E}}
\newcommand\g{{\mathfrak g}}
\theoremstyle{plain}
\newtheorem{theorem}{Theorem}
\newtheorem{prop}{Proposition}
\newtheorem{cor}{Corollary}
\newtheorem{lemma}{Lemma}
\theoremstyle{definition}
\newtheorem{definition}{Definition}
\newtheorem{example}{Example}
\newtheorem{remark}{Remark}
\newenvironment{manualtheorem}[1]{%
  \manualtheoreminner
}{\endmanualtheoreminner}
\let\isout\sout \renewcommand{\sout}[1]{\ifmmode\text{\isout{\ensuremath{#1}}}\else\isout{#1}\fi}
\definecolor{colorTM}{rgb}{.2,.7,.2}
\begin{document}

\title{Detecting horizons of symmetric black holes using relative differential invariants}

\author[a]{David McNutt}
\author[b]{Eivind Schneider}

\affil[a]{\footnotesize Center for Theoretical Physics, Polish Academy of Sciences, Warsaw, Poland}
\affil[b]{\footnotesize Department of Mathematics and Statistics, UiT: The Arctic University of Norway, Tromsø, Norway}

\maketitle

\begin{abstract}
Let $\killing$ be a nontrivial finite-dimensional Lie algebra of vector fields on a manifold $M$, and consider the family of Lorentzian metrics on $M$ whose Killing algebra contains $\killing$. We show that scalar relative differential invariants, with respect to a Lie algebra of vector fields on $M$ preserving $\killing$, can be used to detect the horizons of several well-known black holes.  In particular, using the Lie algebra structure of $\killing$, we construct a general relative differential invariant of order $0$ that always vanishes on $\killing$-invariant Killing horizons. 
\end{abstract}

\tableofcontents

\section{Introduction}

Black holes are solutions to Einstein's field equations describing the result of the gravitational collapse of stellar objects. The study of these solutions gives insight into higher curvature regimes in the universe and by studying the boundaries of these surfaces, it is possible to model the possible gravitational waves arising from perturbed solutions \cite{Thornburg,Bishop}. However, to do this, it is necessary to have a firm definition of a black hole horizon. In the case of stationary black holes used in astrophysics, such as the Kerr solution \cite{kerr1963gravitational}, the horizon can be determined and this coincides for the event horizon. For more general black holes that admit a time-like symmetry, determining a quasi-local invariant surface is a difficult problem.

The event horizon is not an ideal candidate as a boundary for a black hole. It is, generally, not quasi-local and in black hole solutions admitting a positive cosmological constant may not exist at all \cite{senovilla2023ultra}. In practice, quasi-local surfaces, known as future trapped outer horizons or more briefly as apparent horizons are used to determined boundaries of dynamical black holes \cite{Thornburg}. However, such surfaces are often not invariant in the sense that they depend on the foliation of spacetime and therefore cannot be considered fully physical. The observation that in the case of stationary and weakly isolated black hole solutions \cite{Ashtekar} the horizon can be detected as a quasi-local invariant surface has motivated the Geometric horizon conjectures which posit an appropriate quasi-local invariant hypersurface that bounds the black hole must arise as the zero-set of some scalar curvature invariant \cite{ADA}.

The classification of black hole solutions has been investigated using scalar polynomial curvature invariants  \cite{carminati1991algebraic} and the horizon has been shown to be detectable in terms of these curvature invariants \cite{AbdelqaderLake2015, PageShoom2015}. More generally, the classification of black hole solutions can be accomplished using Cartan invariants \cite[Chapter 9]{kramer} and the Killing horizons were shown to be zero-sets of certain Cartan invariants \cite{brooks2018cartan}. 

Initially the investigation of horizon detecting curvature invariants presupposed knowledge of the horizon's location for a given spacetime and a curvature invariant was found that vanishes on the horizon. Using symmetry arguments, \cite{PageShoom2015} described the construction of scalar polynomial curvature invariant that would detect the Killing horizon. However, the existence of such curvature invariants was not guaranteed. Later, by employing the framework of weakly isolated horizons \cite{Ashtekar} a first order curvature invariant was found that would always detect the Killing horizon, without knowledge of its location \cite{ColeyMcNutt}. 

There are other approaches to classification of spacetimes, and in particular black hole solutions, beyond curvature invariants. In principle, an IDEAL classification of spherically symmetric black hole solutions is possible \cite{ferrando2009intrinsic}. However, a general approach to the IDEAL classification has yet to be determined for general spacetimes. Lastly, one can obtain differential invariants by directly investigating the appropriate Lie pseudogroup actions on jet bundles. For instance, this approach has been used to classify Kundt spacetimes \cite{kruglikov2021differential} and also spacetimes admitting two commuting Killing vectors \cite{marvan2008local, ferraioli2020equivalence}. Kerr-like solutions can be treated within the framework of \cite{ferraioli2020equivalence}, but the differential invariants and analysis presented there are not adapted to horizon detection, as the required differential invariant that detects the Killing horizon must be extracted from an algebraic combination of the differential invariants provided in \cite{ferraioli2020equivalence}. 

In this paper we will focus on spacetimes admitting Killing vectors, with the goal of determining relative differential invariants that vanish on physically important hypersurfaces, such as black hole horizons and Killing horizons. Let $\killing$ be a given Lie algebra of of vector fields on a manifold $M$ and consider the family of metrics on $M$ with Killing algebras containing $\killing$. We fix the coordinate expressions of the Killing vectors, imagining that observers agree on the local form of the Lie algebra of Killing vectors (but not necessarily on a basis of the Lie algebra). By investigating the remaining coordinate freedom, and in particular the corresponding scalar differential invariants, we aim to find conditions that determine important hypersurfaces in $M$, and in particular black hole horizons. 

Our approach is as follows. We consider a family of metrics with Killing algebra containing $\killing$. They are sections of the bundle $\pi \colon S_{\text{Lor}}^2 T^*M \to M$ 
that satisfy a partial differential equation (system), (PDE), which we will in general denote by $\E$. In the extreme case, $\E$ is defined by the system
\begin{equation}
L_X g = 0, \qquad \forall X \in \killing, \label{eq:pde}
\end{equation}
in which case the space $\mathrm{Sol}(\E)$ of solutions consists of all metrics for which every element in $\killing$ is a Killing vector field. In general, we will also allow for $\E$ to be a sub-PDE of that defined by \eqref{eq:pde}.  We then treat the PDE $\E$ geometrically, namely as a family of submanifolds in the spaces of jets of sections of the bundle $\pi$, i.e., $\E^i \subset J^i \pi$ for $i\in\{0,1,2,\dots\}$.

Let $G\subset \mathrm{Diff}_{\mathrm{loc}}(M)$ be the Lie pseudogroup of symmetries of $\E$ that also preserve the Lie algebra $\killing$ of vector fields on $M$. It encodes the remaining coordinate freedom, after fixing $\killing$ and $\E$. The Lie algebra (sheaf) $\g$  of vector fields corresponding to the Lie pseudogroup $G$ contains $\killing$ as an ideal, and it is always a Lie subalgebra of the Lie algebra of all vector fields preserving $\killing$,
\begin{equation}
\{X \in \mathcal{D}(M) \mid [X,K] \in \killing, \forall K \in \killing\}. \label{eq:sym}
\end{equation}
In this paper, we will in general not impose the Einstein equation on our Lorentzian manifolds, but simply note that it is possible to do that within our framework. Due to covariance of the Einstein equations, the sub-PDE of \eqref{eq:pde} obtained by imposing the Einstein equation has the full Lie algebra \eqref{eq:sym} as symmetries. The Lie algebra $\g$ prolongs to a Lie algebra $\g^{(i)}$ of vector fields on $\E^i$ for $i=0,1,2,\dots$.

Our goal is to investigate to which extent scalar relative $\g^{(i)}$-invariants on $\E^i$ (or relative differential invariants of order $i$) can be used to detect black hole horizons of metrics admitting Killing vectors. It is natural to conjecture that they would do that, at least in some cases, as they determine hypersurfaces in a coordinate-independent way. We outline the structure and main results of the paper. 

In Section \ref{sect:R} we show how an ideal of $\g$ can be used to construct one relative differential invariant of order $0$, and also one of order $1$ giving the theorem: 
\begin{manualtheorem}{1}
Let $\mathfrak{i} = \langle K_1, \dots, K_r\rangle$ be an ideal of the Lie algebra $\g$ of symmetries of the PDE $\E$, and assume that $\dim \mathfrak{i}=r \geq 1$. Then 
\[ L_{X} (K_1 \wedge \cdots \wedge K_r) = \lambda_X (K_1 \wedge \cdots \wedge K_r), \qquad \forall X \in \g\]
where $\lambda \in \g^*$. Consequently, the function 
\[R^{\mathfrak{i}}= \| K_1^{(0)} \wedge \cdots \wedge K_r^{(0)}\|_{h}^2\] on $\E^0 \subset J^0 \pi$ is a relative invariant with weight $2\lambda$. Moreover, the function $S^{\mathfrak{i}}=\|\bar{d} R^{\mathfrak{i}}\|^2_{h}$ on $\E^1 \subset J^1 \pi$ is a relative invariant with weight $4 \lambda$. 
\end{manualtheorem}
\noindent We note that for some ideals $R^{\mathfrak{i}}$ may be an absolute invariant with $\lambda=0$, and it is even possible that $R^{\mathfrak{i}}$ is simply a constant function. 

We will focus on $\g$-invariant ideals of $\killing$ (which are also ideals of $\g$). Depending on the initial Lie algebra $\killing$ there may be several ways of choosing the ideal $\mathfrak{i}$, and in general different choices will lead to different relative differential invariants. Due to the importance of $\mathfrak{i}$ being an ideal of $\g$, special attention is paid to characteristic ideals of $\killing$. In particular, the radical ideal $\mathfrak{r}$ of $\killing$, and the elements of its derived sequence $\mathfrak{r}_1, \mathfrak{r}_2, \dots, \mathfrak{r}_k$ are characteristic ideals. In Section \ref{sect:abelianideal} we use the terms of the derived sequence of $\mathfrak{r}$ to construct a characteristic abelian ideal $\mathfrak{a}(\mathfrak{r})$, and show that it intersects nontrivially with any other ideal of $\mathfrak{r}$.  

In Section \ref{sect:killingintro} we introduce the concept of $\killing$-invariant Killing horizon, and show that the existence of such implies that $\killing$ contains a nontrivial solvable ideal, spanned by the generators of the Killing horizon. Thus, the radical $\mathfrak{r}$ in the Levi decomposition $\killing = \mathfrak{s} \ltimes \mathfrak{r}$ is nontrivial, which in turn guarantees that $\mathfrak{a}(\mathfrak{r})$ is nontrivial. We show that the relative invariant constructed from $\mathfrak{a}(\mathfrak{r})$ by Theorem \ref{th:R} always vanishes on $\killing$-invariant Killing horizons. 
\begin{manualtheorem}{2}
Let $(M,g)$ be a Lorentzian manifold with Killing algebra $\killing = \mathfrak{s} \ltimes \mathfrak{r}$, let $\g$ be a Lie algebra of vector fields on $M$ containing $\killing$ as an ideal, and let $\mathfrak{a}(\mathfrak{r}) = \langle K_1,\dots, K_r\rangle$ be the abelian ideal defined by \eqref{eq:a(r)}. If $\mathcal{H}$ is a $\killing$-invariant Killing horizon, then the function 
    \[ R_g = \| K_1 \wedge \cdots \wedge K_r \|_g^2\]
    vanishes on $\mathcal{H}$. The function $R_g$ is the restriction of the relative invariant of Theorem \ref{th:R} to $g$: $R_g = R^{\mathfrak{a}(\mathfrak{r})} \circ g$. 
\end{manualtheorem}
 While it is possible to determine the Killing horizons of a stationary black hole by computing the norm of the appropriate Killing vector fields which are null on a particular horizon and are proportional to the generator for the hypersurface, in practice this requires finding a Killing vector field for the horizon. This difficulty is exemplified in the example of the Kerr black hole where determining the Killing vector field for the outer horizon is not completely trivial \cite[section 5.2]{poisson2004}. On the other hand, computing $R_g$ for the Kerr black hole is easy and requires no a priori knowledge about the Killing horizon, as shown in Section \ref{sect:R2}.

In Section \ref{sect:examples} we focus on several specific examples of Killing algebras. We apply the results of Section \ref{sect:idealsandinvariants} together with an independent orbit analysis on $\E^0$ to get a complete picture of the relative differential invariants of order $0$, and in some cases we also make a similar analysis on $\E^1$.  When $\killing$ is the Killing algebras of the Schwarzschild metric, the Kerr metric, or the Near Horizon geometries we see that the well-known horizons can be deduced from the orbit structure on $\E^0$. Furthermore, since these horizons are $\killing$-invariant Killing horizons, they are also detected by the relative differential invariant of Theorem \ref{th:invariantH}. 

Even in some cases where the relative differential invariant of Theorem \ref{th:invariantH} is not defined (for example if $\killing$ is simple), the orbit structure on $\E^i$ can be used to detect horizons. In Section \ref{sect:spherical} we consider the family of spherically symmetric spacetimes, for which $\killing$ contains no nontrivial ideals. In this case, $\E^0$ is 8-dimensional and all orbits on $\E^0$ are 7-dimensional. In particular, there are no proper relative invariants on $\E^0$ (only an absolute invariant). However, there are relative invariants on $\E^1$, and we show that one of these determines the horizon of imploding spherically symmetric metrics. 

To provide background for these results, we begin in Section \ref{sect:prerequisites} with a quick introduction to jets and differential invariants, after which we outline the general set-up of our approach in more detail.

\section{Jets and differential invariants} \label{sect:prerequisites}
In this section we give a brief introduction to some language and notations related to jet spaces and differential invariants. For more information about these subjects, we refer to \cite{olver1995equivalence, kruglikovlychagin2007handbook, kruglikovlychagin2016global}.  

\subsection{Jet spaces} \label{sect:jets}
A Lorentzian metric on a 4-dimensional manifold $M$ is a section of the bundle $\pi \colon S^2_{\mathrm{Lor}} T^*M \to M$ of symmetric 2-forms with Lorentzian signature. A metric $g \in \Gamma(\pi)$ defined on a domain $U \subset M$ determines a submanifold $g(U) \subset S^2_{\mathrm{Lor}} T^*M$. We say that two metrics $g,\tilde g$ are $k$-equivalent at $a \in M$ in a neighborhood $U$ of $x$ if the submanifolds $g(U), \tilde g(U) \subset S^2_{\mathrm{Lor}} T^*M$ are tangent up to order $k$ at $a$, for $k=0,1,2,\dots$. We call the equivalence class of $g$ the $k$-jet of $g$ at $a$, and denote it by $[g]_a^k$. It can be thought of as the collection of $k$th-degree Taylor polynomials of the components of $g$ at the point $a$. We denote by $J_a^k \pi$ the space of $k$-jets of metrics at the point $a$, and define the jet space $J^k \pi = \bigsqcup_{a \in M} J_a^k \pi$. Note that $J^0 \pi =S^2_{\mathrm{Lor}} T^*M$. We have the natural projections $\pi_{k,l} \colon J^k \pi \to J^l \pi$ for $0 \leq l < k$ and $\pi_k \colon J^k \pi \to M$. 

A local diffeomorphism $\varphi$, defined on $U \subset M$, prolongs to a local diffeomorphism defined on $\pi_k^{-1}(U) \subset J^k \pi$ in the following way. Consider a general point $\theta = [g]_a^k \in \pi_k^{-1}(U) \subset J^k \pi$, where $g$ is a section of $\pi$ defined in a neighborhood of $a \in U$. Then $\varphi^{(k)}(\theta)= [(\varphi^{-1})^* g]_{\varphi(a)}^k$. In a similar way, a vector field $X$ on $U \subset M$ can be uniquely prolonged to a vector field $X^{(k)}$ on $\pi_k^{-1}(U) \subset J^k \pi$. 

The total space $S^2_{\mathrm{Lor}} T^*M$ is naturally equipped with a horizontal symmetric 2-form $h$, defined by $h_{\theta} = \theta \in S_{\mathrm{Lor}}^2T^*_a M$ for any point $\theta \in S_{\mathrm{Lor}}^2 T_a^* M$. In particular, for $g \in \Gamma(\pi)$ we have $h \circ g = g$. The 2-form $h$ is invariant under the prolongation of any local diffeomorphism $\varphi$ and any vector field $X$ on $M$, meaning 
\[ (\varphi^{(0)})^* h = h, \qquad L_{X^{(0)}} h = 0. \]
Often we use these properties for computing formulas for $\varphi^{(0)}$ and $X^{(0)}$ from the formulas for $\varphi$ and $X$, respectively. 

For example, if $x^1, \dots, x^n$ are local coordinates on $M$, they can be extended to a coordinate system on   $S^2_{\mathrm{Lor}} T^*M$ by adding coordinates $u_{ij}$, $1 \leq i <j \leq n$ in such a way that $h$ takes the form 
\[ h= \sum_{i \leq j} u_{ij} dx^i dx^j.\]
Furthermore, they can be extended canonically to coordinates $(u_{ij})_\sigma$ on $J^k \pi$, where $\sigma$ is a multi-index with $|\sigma|\leq k$. 

If $g = \sum_{i\leq j} g_{ij}(x) dx^i dx^j$ is a section of $\pi$, we write 
$u_{ij}|_g = u_{ij} \circ g= g_{ij}(x)$, and easily verify that $h \circ g = g$. The lift of a general vector field $X= a^i(x) \partial_{x^i}$ takes the form 
\[ X^{(0)} = \sum_{i} a^i(x) \partial_{x^i} + \sum_{i\leq j} b_{ij}(x,u) \partial_{u_{ij}},\]
where the functions $b_{ij}(x,u)$ are uniquely determined from the linear algebraic system $L_{X^{(0)}} h =0$. Higher prolongations $X^{(k)}$ are then computed by the standard jet-prolongation formulas (see for example \cite[sect.~1.5]{kruglikovlychagin2007handbook}). 

A function $f$ on $J^k \pi$ can be composed with the $k$-jet of a section $g$ of $\pi$ to give a function $f_g := f \circ j^k g$ on the base $M$. The horizontal exterior derivative, denoted by $\bar d$, satisfies $(\bar d f)\circ j^{k+1} g = d(f\circ j^k g)$ for any $g$, where $d$ is the exterior derivative on $M$. In coordinates it can be written as $\bar df = D_{x^{l}}(f) dx^l$, where $D_{x^l}(f)$ denotes the total derivative operator 
\[ D_{x^{l}} = \partial_{x^l} + \sum_{i \leq j} \sum_{\sigma}(u_{ij})_{\sigma+1_l} \partial_{(u_{ij})_\sigma}.\]
The total derivative satisfies $D_{x^l}(f)\circ j^{k+1} g = \partial_{x^l}(f \circ j^k g)$. 

\subsection{PDE coming from Killing vectors} \label{sect:pde}
In this paper, $\killing$ will be a fixed finite-dimensional Lie algebra of vector fields on $M$.  The elements of $\killing$ are Killing vectors of a section $g \in \Gamma(\pi)$ if and only if 
\begin{equation} L_K g = 0, \qquad \forall K \in \killing. \label{eq:killing}
\end{equation}
This gives an (in general overdetermined) system $\mathcal E$ of PDEs on the components of $g$. Bringing this system to involution, which we will always do, gives us for each $k \geq 0$ a submanifold $\E^k \subset J^k \pi$ such that the maps $\pi_{i+1,i}|_{\E^i} \colon \E^{i+1} \to \E^{i}$ are submersions. In this case, $\E$ is called a formally integrable PDE \cite{kruglikovlychagin2007handbook}. It is possible that the inclusion $\E^0 \subset S^2_{\mathrm{Lor}} T^*M$ is strict. Let $\tau^i \colon \E^i \to J^i \pi$ be the inclusion map. Then $h_\E=(\tau^0)^* h$ is a horizontal symmetric 2-form on $\E^0$.

We will use $\E$ to refer to the infinite collection $\{\E^0, \E^1, \dots \}$ of submanifolds. We denote the space of (smooth) solutions of $\mathcal E$ by $\mathrm{Sol}(\mathcal E)$:
\[\mathrm{Sol}(\mathcal E) = \{g \in \Gamma(S^2_{\text{Lor}} T^*M) \mid L_K g = 0, \forall K \in \killing\}.\]
Geometrically, the fact that $g \in \mathrm{Sol}(\E)$ means that $[g]_a^k \in \E^k$. 

We will throughout this paper assume that $\killing$ is the Lie algebra of Killing vectors for at least one Lorentzian metric on $M$, so that $\mathrm{Sol}(\E)$ is nonempty. If $\killing$ is the Lie algebra of killing vectors of a section $g \in \Gamma(\pi)$, we call it the Killing algebra of $g$. 

In some cases it is desirable to consider a sub-PDE $\tilde \E$ given by submanifolds $\tilde \E^i \subset \E^i$, $i=0,1,2, \dots$. For example, we may add to \eqref{eq:killing} the constraints of the Einstein equation.

\subsection{The Lie algebra preserving $\killing$}
There exists a Lie pseudogroup $G$ consisting of the diffeomorphisms on $M$ preserving $\killing$:
\[ G= \{\varphi \in \mathrm{Diff}_{\mathrm{loc}}(M) \mid \varphi_* \killing = \killing\} .\] 
We denote the corresponding Lie algebra (sheaf) of vector fields by $\g$: 
 \[ \g = \{X \in \mathcal{D}(M) \mid [X,K] \in \killing, \forall K \in \killing\}.\]
It is obvious that $\killing$ is an ideal of $\g$. Both $G$ and $\g$ are often infinite-dimensional. 

Any diffeomorphism in $G$ lifts to a diffeomorphism of $J^k \pi$ for any integer $k \geq 0$, and it is a symmetry of $\mathcal E$, i.e., it preserves the set $\mathcal E^k$ for each $k$, and also the space of solutions $\mathrm{Sol}(\E)$.  In most cases we will work with $\g$ rather than $G$. The Lie algebra $\g$ consists of (infinitesimal) symmetries of $\E$:
\[ X^{(k)}_\theta \in T_\theta \E^k, \qquad \forall \theta \in \E^k, \forall X \in \g.\]

As mentioned in the previous subsection, it is possible to restrict to a sub-PDE $\tilde \E$. If $\tilde \E$ is not $\g$-invariant, one should restrict to a subalgebra $\tilde \g \subset \g$ of symmetries of $\tilde \E$. 
We will use the notation 
\[ \g^{(i)} = \{ X^{(i)} \mid X \in \g\} \]
for the Lie algebra of prolonged vector fields. 

\subsection{Differential invariants}
Let $\E$ be a PDE, defined as a sequence of submanifolds $\E^i \subset J^i \pi$, $i=0,1,2,\dots$, and let $\g$ a Lie algebra of symmetries of $\E$. We define absolute and relative differential invariants in this context.

\begin{definition} \label{def:invariants}
A function $I$ on $\E^k$ is called an absolute differential invariant of order $k$ if
\[ X^{(k)}(I) = 0, \qquad \forall X \in  \g. \] 
A function $R$ on $\E^k$ is called a relative differential invariant of order $k$ if
\[ X^{(k)}(R)= \lambda_X R, \qquad \forall X \in  \g,\]
for some $\lambda  \in \mathrm{Hom}(\mathfrak g, C^\infty(J^k \pi))$. 
\end{definition}
Both equations in the definitions are required to hold only on $\E^k$ (in general, differential invariants on $\E^k$ do not extend to differential invariants on $J^k \pi$). All the  differential invariants that we consider will turn out to be rational. We will not give a detailed explanation of this fact, but simply note that it is related to transitivity of $\g$ and algebraicity of the corresponding Lie pseudogroup (see \cite{kruglikovlychagin2016global} for more details). Moreover, for polynomial relative invariants we actually have $\lambda \in \mathrm{Hom}(\g, C^\infty(M))$, as explained in \cite{kruglikovschneider2024invariantdivisors}.

Absolute differential invariants are constant on $\g^{(k)}$-orbits, meaning that their level sets are $\g^{(k)}$-invariant. For relative invariants, only the zero sets are invariant in general. Any nonvanishing function on $\E^k$ technically satisfies the conditions of Definition \ref{def:invariants}, and in general we will avoid these. (The space of relative invariants forms a group under multiplication. This group can be endowed with an equivalence relation that identifies relative invariants that differ by a nonvanishing factor, under which any nonvanishing function is equivalent to a constant function. See \cite{kruglikovschneider2024invariantdivisors} and references therein for more details.)
\begin{example}
 The function $\det(h)$ on $\E^0 \subset J^0 \pi$ satisfies the condition \[X^{(0)}(\det(h))= \lambda_X \det(h), \qquad \forall X \in \g,\] but $\det(h)$ never vanishes on $\E^0 \subset J^0 \pi= S_{\mathrm{Lor}}^2 T^*M$. We therefore do not consider it to be a proper relative differential invariant.  
\end{example}

A differential invariant is in general not a function on $M$, but on $\E^k \subset J^k \pi$. Let $I$ be an absolute differential invariant of order $k$. Then, for any $g \in \mathrm{Sol}(\E)$ the function $I|_g=I \circ j^k g$ is a function on $M$. Intuitively, its invariance means the following: Its definition in terms of the components of $g$ and their partial derivatives up to order $k$ does not depend on the coordinates in which $g$ was expressed, with the coordinate freedom being given by the (Zariski connected component of the) Lie pseudogroup $G$ corresponding to $\g$ \cite[Section 1.1]{kruglikov2019differential}. Assume that $I|_g$ is not constant in any neighborhood of  $a \in M$. Then we can, in a neighborhood of $a$ define a hypersurface $I|_g=C$ in a $\g$-invariant way. Here $C$ can be any constant in the image of $I|_g$, and by varying $C$ one easily realizes that $M$ is foliated by such hypersurfaces in a neighborhood of $a$. For a relative invariant $R$ of order $k$, we define the restriction $R|_g=R \circ j^k g$ in the same way. If we assume that $R|_g$ does not vanish entirely on $M$, and that it does vanish on some points, we will often get a unique invariantly defined hypersurface $\{R|_g=0\} \subset M$. 
\begin{remark}
    In general, the relative invariants computed in this paper can not be restricted to arbitrary metrics, but only to those satisfying the PDE $\E$. This is explained by the fact that functions on $\E^k \subset J^k \pi$ can not necessarily be extended to invariant functions on $J^k \pi$. Invariants of this type are sometimes referred to as ``conditional invariants'' (see \cite{kruglikovschneider2024invariantdivisors}).
\end{remark}

Since relative differential invariants single out particular hypersurfaces, rather than providing a foliation of such, they will be our main object of study. However, note that for a hypersurface to be ``special'' it is strictly speaking not sufficient that it is given in terms of a relative differential invariant, as the following argument shows. Let $I=R/Q$ be a rational absolute differential invariant, and assume that the level sets of $I|_g$ foliates the spacetime. Any leaf of this foliation is given by $I|_g=C$ for some constant $C$. This equation is equivalent to $R|_g-C Q|_g =0$, where $R-C Q$ is a relative invariant.

Given a Lie algebra $\g$ of vector fields on $M$, and its prolongation $\g^{(k)}$ to $\E^k$, the important relative invariants of order $k$ are often found by locating the points in $\E^k$ where the orbit dimension drops. In particular, if $\g^{(k)}$ is transitive almost everywhere on $\E^k$, then all proper relative invariants are found in this way. 

If generic $\g^{(k)}$-orbits have smaller dimension than $\E^k$, it is possible (but not necessary) that the orbit dimension drops on the level set of an absolute invariant. To illustrate this, let us consider the following two very simple examples of Lie algebras on $\mathbb R^2$: $\g_1=\langle y \partial_x \rangle$ and $\g_2 = \langle x \partial_x \rangle$. In both cases $I=y$ is an absolute invariant. In the first case $\{I=0\}$ is exactly the hypersurface consisting of points where the orbit dimension drops. In the second case $R=x$ is an additional relative invariant, and $\{R=0\}$ is the hypersurface of singular points. These examples illustrate general phenomena that occur for differential invariants as well.

\section{Differential invariants, and ideals of $\killing$} \label{sect:idealsandinvariants}
In this section we use the structure of the Killing algebra $\killing$ to construct relative differential invariants with respect to a Lie algebra $\g$ of vector fields on $M$ that contains $\killing$ as an ideal. In Section \ref{sect:R} we show how to construct a general relative differential invariant of order 0 (and one of order 1) from an ideal of $\g$. We discuss candidates for such ideals in Section \ref{sect:abelianideal}. In particular, we construct an abelian characteristic ideal $\mathfrak{a}(\mathfrak{r})$ from the derived sequence of the radical ideal $\mathfrak{r}$ of $\killing$ and show that it intersects nontrivially with all ideals of $\mathfrak{r}$. In Section \ref{sect:killingintro} we show that the (solvable) Lie algebra $\generators$ of generators of a Killing horizon $\mathcal{H}$ is an ideal in $\killing$ if $\mathcal{H}$ is $\killing$-invariant. The fact that $\mathfrak{a}(\mathfrak{r}) \cap \generators \neq \{0\}$ implies that the relative differential invariant given by the ideal $\mathfrak{a}(\mathfrak{r})$ vanishes on $\mathcal{H}$.

\subsection{Ideals of $\g$ giving rise to relative invariants}\label{sect:R}
As in Section \ref{sect:prerequisites}, let $\killing$ be a finite-dimensional Lie algebra, $\g$ a larger Lie algebra that contains $\killing$ as an ideal, and let $\E$ be $\g$-invariant PDE system on the space of $\killing$-invariant Lorentzian metrics, meaning that the vector fields of  $\g^{(i)}$ are tangent to $\E^{i} \subset J^i \pi$. 

Our first theorem does not involve $\killing$ explicitly. 
\begin{theorem} \label{th:R}
Let $\mathfrak{i} = \langle K_1, \dots, K_r\rangle$ be an ideal of the Lie algebra $\g$ of symmetries of the PDE $\E$, and assume that $\dim \mathfrak{i}=r \geq 1$. Then 
\[ L_{X} (K_1 \wedge \cdots \wedge K_r) = \lambda_X (K_1 \wedge \cdots \wedge K_r), \qquad \forall X \in \g\]
where $\lambda \in \g^*$. Consequently, the function 
\[R^{\mathfrak{i}}= \| K_1^{(0)} \wedge \cdots \wedge K_r^{(0)}\|_{h}^2\] on $\E^0 \subset J^0 \pi$ is a relative invariant with weight $2\lambda$. Moreover, the function $S^{\mathfrak{i}}=\|\bar{d} R^{\mathfrak{i}}\|^2_{h}$ on $\E^1 \subset J^1 \pi$ is a relative invariant with weight $4 \lambda$. 
\end{theorem}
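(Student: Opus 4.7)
The proof breaks into three steps, following the three successive claims.

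\textbf{Step 1 (the weight $\lambda$).} Since $\mathfrak{i}$ is a finite-dimensional ideal of $\g$, for each $X \in \g$ we have $[X,K_i] \in \mathfrak{i}$, hence
\[ [X,K_i] = \sum_{j=1}^r c_{ij}(X)\, K_j \]
for uniquely determined scalars $c_{ij}(X)$ (uniqueness uses that $K_1,\dots,K_r$ are linearly independent over $\R$ as a basis of $\mathfrak{i}$, while the coefficients are forced to be constants). Applying the Leibniz rule for the Lie derivative on the $r$-vector $\Omega := K_1 \wedge \cdots \wedge K_r$,
\[ L_X \Omega = \sum_{i=1}^r K_1 \wedge \cdots \wedge [X,K_i] \wedge \cdots \wedge K_r, \]
and substituting the previous expression, the only surviving terms are the diagonal ones (the off-diagonal wedges vanish by antisymmetry). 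This gives $L_X \Omega = \lambda_X \Omega$ with $\lambda_X := \sum_{i} c_{ii}(X) = \operatorname{tr}(\mathrm{ad}_X|_{\mathfrak{i}})$, which is manifestly linear in $X$, i.e.\ $\lambda \in \g^*$.

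\textbf{Step 2 (the invariant $R^{\mathfrak{i}}$).} Write $\Omega^{(0)} := K_1^{(0)} \wedge \cdots \wedge K_r^{(0)}$. Since prolongation is a Lie algebra morphism, $[X^{(0)}, K_i^{(0)}] = [X,K_i]^{(0)} = \sum_j c_{ij}(X) K_j^{(0)}$, so the same diagonal-collapse argument gives
\[ L_{X^{(0)}} \Omega^{(0)} = \lambda_X \Omega^{(0)}.\]
The tautological horizontal form $h$ on $J^0\pi$ satisfies $L_{X^{(0)}} h = 0$ for any $X \in \mathcal{D}(M)$ (as recalled in Section~\ref{sect:jets}), hence so does the induced inner product on $r$-vectors. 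Applying the Leibniz rule to $R^{\mathfrak{i}} = \langle \Omega^{(0)}, \Omega^{(0)}\rangle_h$ yields
\[ X^{(0)}(R^{\mathfrak{i}}) = 2 \langle L_{X^{(0)}}\Omega^{(0)}, \Omega^{(0)}\rangle_h = 2\lambda_X R^{\mathfrak{i}},\]
which is the relative invariance with weight $2\lambda$.

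\textbf{Step 3 (the invariant $S^{\mathfrak{i}}$).} The horizontal differential $\bar{d}$ commutes with the Lie derivative along prolonged symmetries, i.e.\ $L_{X^{(1)}} \bar{d} f = \bar{d}(X^{(0)}(f))$ for any $f \in C^\infty(J^0\pi)$ (this follows from $X^{(1)}$ preserving the Cartan distribution, or equivalently from the coordinate identity $[X^{(1)}, D_{x^l}] = -\partial_{x^l}(a^k) D_{x^k}$ together with $\bar{d} = \sum_l D_{x^l}(\cdot)\, dx^l$, where the $dx^l$ transform dually). Combining this with Step~2 and the fact that $\lambda_X$ is a constant,
\[ L_{X^{(1)}} \bar{d} R^{\mathfrak{i}} = \bar{d}(X^{(0)}(R^{\mathfrak{i}})) = 2\lambda_X\, \bar{d} R^{\mathfrak{i}}.\]
The inverse form $h^*$ on horizontal covectors is likewise $X^{(0)}$-invariant (and its pullback to $J^1\pi$ is $X^{(1)}$-invariant). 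Applying Leibniz to $S^{\mathfrak{i}} = h^*(\bar{d} R^{\mathfrak{i}}, \bar{d} R^{\mathfrak{i}})$ gives $X^{(1)}(S^{\mathfrak{i}}) = 4\lambda_X S^{\mathfrak{i}}$.

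\textbf{Expected obstacle.} The conceptual steps are all standard Leibniz-type calculations; the only delicate point is justifying that the constants $c_{ij}(X)$ are genuinely constants (rather than functions on $M$) even when $\g$ is infinite-dimensional, and that the commutation rule $L_{X^{(1)}}\bar{d} = \bar{d}\, L_{X^{(0)}}$ holds in the jet-space formalism used here. Once these identifications are in place, the proof reduces to tracking the factors of $2$ produced by the Leibniz rule on squared norms.
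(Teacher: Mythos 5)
Your proof is correct and follows essentially the same route as the paper's: the Leibniz rule with diagonal collapse for the $r$-vector, the invariance $L_{X^{(0)}}h=0$ giving the factor $2\lambda$, and the commutation of $\bar d$ with the prolonged Lie derivative (plus constancy of $\lambda_X$) giving the factor $4\lambda$. Your added identification $\lambda_X=\operatorname{tr}(\mathrm{ad}_X|_{\mathfrak i})$ and the explicit remark that prolongation is a Lie algebra morphism are harmless refinements of the same argument.
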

\begin{proof} 
Let $X \in \g$ be a general element, and $X^{(0)}$ its prolongation to $\E^0$. The expression $L_X(K_1 \wedge \cdots \wedge K_r)$ is by the Leibniz rule a sum of exterior products that have factors $K_1 ,\dots, K_{i-1}, L_X K_i, K_{i+1}, \dots, K_r$ where $i$ runs from $1$ to $r$. Since $\mathfrak{i}$ is an ideal of $\g$, we have $L_X K_i= C_i^j K_j$ for some set of constants $C_i^j$. Thus the only terms that remain in this sum are constant multiples of $K_1 \wedge \dots \wedge K_r$, which proves the first part of the proposition. 

By denoting the norm of the $r$-form as $$||K_1^{(0)} \wedge \cdots \wedge K_r^{(0)}||_h^2 = h(K_1^{(0)} \wedge \cdots \wedge K_r^{(0)},K_1^{(0)} \wedge \cdots \wedge K_r^{(0)}),$$ since $L_{X^{(0)}} h = 0$ for every $X \in \g$, it follows that 
  \begin{align*} L_{X^{(0)}} R^{\mathfrak{i}} &= L_{X^{(0)}} (h(K_1^{(0)} \wedge \cdots \wedge K_r^{(0)},K_1^{(0)} \wedge \cdots \wedge K_r^{(0)})) \\&= 2 h(L_{X^{(0)}} (K_1^{(0)} \wedge \cdots \wedge K_r^{(0)}),K_1^{(0)} \wedge \cdots \wedge K_r^{(0)}) \\
  &= 2 \lambda_X h(K_1^{(0)} \wedge \cdots \wedge K_r^{(0)},K_1^{(0)} \wedge \cdots \wedge K_r^{(0)})\\
  &= 2 \lambda_X R^{\mathfrak{i}}, \qquad \forall X \in \g.
  \end{align*}
  Next, since the Lie-derivative with respect to $X^{(1)}$ commutes with the horizontal exterior derivative $\bar{d}$, we have 
  \[L_{X^{(1)}}(\bar{d}R^{\mathfrak{i}}) = \bar{d}(2\lambda_X R^{\mathfrak{i}})=2 \bar{d}(\lambda_X)R^{\mathfrak{i}}+2\lambda_X \bar{d}R^{\mathfrak{i}} = 2 \lambda_X \bar{d}R^{\mathfrak{i}},\]
  where the last equality follows from the fact that $\lambda_X$ is constant for every $X \in \mathfrak g$. 
It follows that
  \[L_{X^{(1)}} (h^{-1}(\bar{d}R^{\mathfrak{i}}, \bar{d}R^{\mathfrak{i}})) = 2 h^{-1} (L_{X^{(1)}} \bar{d} R^{\mathfrak{i}}, \bar{d} R^{\mathfrak{i}}) = 4 \lambda_X h^{-1}(\bar{d} R^{\mathfrak{i}}, \bar{d} R^{\mathfrak{i}}).\] 
\end{proof}
Notice that the choice of basis of  $\mathfrak{i}$ above influences the relative invariants $R$ and $S$ only by a constant scalar factor. Although the statement is technically true in general, it is possible that $R^{\mathfrak{i}}$ is constant or even constantly zero. This happens for example always when $\dim {\mathfrak i} \geq \dim M$. 

Given a metric $g \in \mathrm{Sol}(\E)$ on $M$, we can compute the restriction of $R^{\mathfrak{i}}$ and $S^{\mathfrak{i}}$ (or any other function on $\E^i$) to $g$: 
\[R^{\mathfrak{i}}_g = R^{\mathfrak{i}} \circ g, \qquad S^{\mathfrak{i}}_g = S^{\mathfrak{i}} \circ j^1 g.\] 
The functions $R^{\mathfrak{i}}_g$ and $S^{\mathfrak{i}}_g$ are functions on $M$. 

Our main application of Theorem \ref{th:R} will be to $\g$-invariant ideals of $\killing$. If $\mathfrak{i}$ is an ideal of $\killing$, then the vector fields of $\mathfrak{i}$ span a (not necessarily regular) distribution  in $TM$ which is invariant with respect to the vector fields of $\killing$. If in addition $\mathfrak{i}$ is an ideal in $\g$, this distribution is $\g$-invariant. This happens in particular when $\mathfrak{i}$ is a characteristic ideal of $\killing$.
\begin{definition} 
An ideal $\mathfrak{i}$ in $\killing$ is called a characteristic ideal if it satisfies  $D(\mathfrak{i}) \subset \mathfrak{i}$  for any derivation $D \colon \killing \to \killing$.
\end{definition} 
Recall that a derivation of the Lie algebra $\killing$ is a linear map $D \colon \killing \to \killing$  satisfying $D[X,Y] = [DX,Y]+[X,DY]$ for every pair $X,Y \in \killing$. It follows that if $\g$ is a (possibly infinite-dimensional) Lie algebra that contains $\killing$ as an ideal, and $\mathfrak{i}$ is a characteristic ideal of $\killing$, then $\mathfrak{i}$ is an ideal of $\g$. This is due to the fact that for any $Z \in \g$, the operation $\mathrm{ad}_Z = [Z,\cdot]$ is a derivation of $\killing$. Thus any characteristic ideal of $\killing$ can play the role of $\mathfrak{i}$ in Theorem \ref{th:R}.

\subsection{An abelian ideal of $\killing$} \label{sect:abelianideal}
For any Lie algebra $\killing$, the Levi decomposition lets us write 
\[\killing = \mathfrak{s} \ltimes \mathfrak{r},\] 
where $\mathfrak{s}$ is a semisimple Lie algebra and $\mathfrak{r}$ the radical (largest solvable ideal) of $\killing$, and we will be using this decomposition throughout the paper, so that $\mathfrak{s}$ and $\mathfrak{r}$ will always denote the components of the Levi decomposition of $\killing$.

The derived sequence of the Lie algebra $\mathfrak{r}$ is defined as follows:
\begin{equation}
    \mathfrak{r}_0 = \mathfrak{r}, \qquad \mathfrak{r}_i = [\mathfrak{r}_{i-1}, \mathfrak{r}_{i-1}]. 
\end{equation}
\noindent As $\mathfrak{r}$ is solvable, there exists some $k \in \mathbb{N}$ such that $\mathfrak{r}_k \neq 0 $ and $\mathfrak{r}_{k+1} = 0$. In this case $\mathfrak{r}_k$ is abelian. 
The radical $\mathfrak{r}= \mathfrak{r}_0$ is a characteristic ideal of $\killing$ by Theorem 2.5.13 of \cite{winter1972liealgebras}. Furthermore, if $\mathfrak{r}_i$ is a characteristic ideal, then $\mathfrak{r}_{i+1}$ is a characteristic ideal, since 
\[D \mathfrak{r}_{i+1} = D[\mathfrak{r}_i,\mathfrak{r}_i] \subset [D \mathfrak{r}_i,\mathfrak{r}_i]+[\mathfrak{r}_i,D \mathfrak{r}_i] \subset [\mathfrak{r}_i,\mathfrak{r}_i] = \mathfrak{r}_{i+1}.\]
It follows by induction that all the Lie algebras $\mathfrak{r}_0, \dots, \mathfrak{r}_k$ are characteristic ideals of $\killing$. In particular we have the following statement:
\begin{lemma}\label{lem:rideal}
    Let $\g$ be a Lie algebra and $\killing = \mathfrak{s} \ltimes \mathfrak{r}$ an ideal of $\g$. If $\mathfrak{r}_0, \mathfrak{r}_1,\dots$ is the derived sequence of $\mathfrak{r}$, then $\mathfrak{r}_i$ is an ideal of $\g$ for each $i=0,1,\dots$. 
\end{lemma}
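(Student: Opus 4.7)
The plan is essentially to combine the two observations already made in the paragraphs preceding the lemma: (i) each $\mathfrak{r}_i$ is a characteristic ideal of $\killing$, and (ii) characteristic ideals of $\killing$ are automatically ideals of any $\g$ in which $\killing$ sits as an ideal. So my proof will first establish (i) by induction on $i$, and then derive the lemma from (ii).

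For the base case of (i), I would invoke the cited Theorem 2.5.13 of \cite{winter1972liealgebras}, which says that the radical $\mathfrak{r} = \mathfrak{r}_0$ of any finite-dimensional Lie algebra is a characteristic ideal. For the inductive step, assume $\mathfrak{r}_i$ is a characteristic ideal of $\killing$, and let $D \colon \killing \to \killing$ be any derivation. Because $D$ is a derivation and $\mathfrak{r}_i$ is $D$-invariant by hypothesis, I have the inclusion
\[ D \mathfrak{r}_{i+1} = D[\mathfrak{r}_i,\mathfrak{r}_i] \subset [D\mathfrak{r}_i, \mathfrak{r}_i] + [\mathfrak{r}_i, D\mathfrak{r}_i] \subset [\mathfrak{r}_i,\mathfrak{r}_i] = \mathfrak{r}_{i+1}, \]
which shows $\mathfrak{r}_{i+1}$ is also $D$-invariant, hence characteristic (this is exactly the computation already written in the excerpt; I would just reproduce it).

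For step (ii), I use that $\killing$ is an ideal of $\g$: for any $Z \in \g$, the map $\mathrm{ad}_Z \colon \killing \to \g$ actually lands inside $\killing$, and the Jacobi identity in $\g$ implies that the restriction $\mathrm{ad}_Z|_{\killing}$ is a derivation of $\killing$. Since each $\mathfrak{r}_i$ is characteristic, it is invariant under every derivation of $\killing$, and in particular $[\g, \mathfrak{r}_i] = \mathrm{ad}_{\g}(\mathfrak{r}_i) \subset \mathfrak{r}_i$. Combined with the fact that $\mathfrak{r}_i$ is already a Lie subalgebra (in fact an ideal) of $\killing$, this gives that $\mathfrak{r}_i$ is an ideal of $\g$, as required.

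There is no real obstacle here: the lemma is essentially a bookkeeping statement assembling pieces the authors have already laid out in the preceding paragraph. The only point one has to be slightly careful about is that characteristic ideals are defined relative to $\killing$ (derivations of $\killing$), so one must verify that $\mathrm{ad}_Z|_{\killing}$ is indeed a derivation of $\killing$, which uses precisely the assumption that $\killing$ is an ideal of $\g$ (so that $\mathrm{ad}_Z$ sends $\killing$ into $\killing$).
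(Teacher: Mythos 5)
Your proposal is correct and follows exactly the paper's own route: the paper establishes by induction (using Theorem 2.5.13 of Winter for the base case and the same derivation computation for the step) that each $\mathfrak{r}_i$ is a characteristic ideal of $\killing$, and then applies the earlier observation that any characteristic ideal of $\killing$ is an ideal of $\g$ because $\mathrm{ad}_Z|_{\killing}$ is a derivation of $\killing$ for every $Z \in \g$. Your extra care in checking that $\mathrm{ad}_Z$ maps $\killing$ into itself is the same point the paper relies on, so there is nothing to add.
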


For any Lie algebra $\mathfrak{h}$, we let $\mathfrak{z}(\mathfrak{h})$ denote the center of $\mathfrak{h}$:
\[\mathfrak{z}(\mathfrak{h}) = \{X \in \mathfrak{h} \mid [X,Y]=0, \forall Y \in \mathfrak{h}\}. \]
In particular, since $\mathfrak{r}_k$ is abelian, we have $\mathfrak{z}(\mathfrak{r}_k)= \mathfrak{r}_k$. 
The Lie algebra $\mathfrak{z}(\mathfrak{r}_i)$ is a characteristic ideal of $\killing$, for each $i$, because $\mathfrak{r}_i$ is a characteristic ideal: 
\[ [D \mathfrak{z}(\mathfrak{r}_i), \mathfrak{r}_i] \subset D[\mathfrak{z}(\mathfrak{r}_i),\mathfrak{r}_i] + [\mathfrak{z}(\mathfrak{r}_i),D\mathfrak{r}_i] \subset [\mathfrak{z}(\mathfrak{r}_i),\mathfrak{r}_i] = \{0\}. \]
This argument was adapted from \cite{seligman1957characteristic}. It is obvious that the sum of characteristic ideals is a characteristic ideal, resulting in the following statement:
\begin{prop}\label{prop:aideal}
Let $\g$ be a Lie algebra and $\killing=\mathfrak{s} \ltimes \mathfrak{r}$ an ideal of $\g$. If $\mathfrak{r}_0, \mathfrak{r}_1, \dots, \mathfrak{r}_k$ is the derived sequence of $\mathfrak{r}$ and $\mathfrak{r}_{k+1}=\{0\}$, then the abelian Lie algebra $\mathfrak{a}(\mathfrak{r}) \subset \mathfrak{r}$ defined by 
\begin{equation}
    \mathfrak{a}(\mathfrak{r}) = \mathfrak{z}(\mathfrak{r}_0) + \mathfrak{z}(\mathfrak{r}_1) + \ldots + \mathfrak{z}(\mathfrak{r}_k). \label{eq:a(r)}
\end{equation}
is a characteristic ideal of $\killing$. In particular, $\mathfrak{a}(\mathfrak{r})$ is an ideal of $\g$. 
\end{prop}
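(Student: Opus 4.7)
The plan is to verify three things in sequence: that $\mathfrak{a}(\mathfrak{r})$ is abelian (as claimed implicitly in the statement), that it is a characteristic ideal of $\killing$, and that this characteristic ideal property upgrades to being an ideal of $\g$. The preceding paragraphs have already essentially set up all the pieces; the proof amounts to combining them.

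The one step that is not completely immediate from the prior discussion is abelianness. I would deduce it from the descending chain
\[ \mathfrak{r}_0 \supset \mathfrak{r}_1 \supset \cdots \supset \mathfrak{r}_k. \]
Given $X \in \mathfrak{z}(\mathfrak{r}_i)$ and $Y \in \mathfrak{z}(\mathfrak{r}_j)$ with $i \leq j$, note that $Y \in \mathfrak{r}_j \subset \mathfrak{r}_i$, and since $X$ lies in the center of $\mathfrak{r}_i$ we get $[X,Y]=0$. Hence all the summands commute pairwise, and as each $\mathfrak{z}(\mathfrak{r}_i)$ is itself abelian, their sum $\mathfrak{a}(\mathfrak{r})$ is abelian. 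The inclusion $\mathfrak{a}(\mathfrak{r}) \subset \mathfrak{r}$ is clear since every summand sits inside $\mathfrak{r}_i \subset \mathfrak{r}$.

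For the characteristic ideal property I would simply cite the two facts established in the preceding subsection: each $\mathfrak{r}_i$ is a characteristic ideal of $\killing$ (by induction from Winter's theorem), and consequently each $\mathfrak{z}(\mathfrak{r}_i)$ is a characteristic ideal of $\killing$ by the short Leibniz computation already displayed. Since for any derivation $D \colon \killing \to \killing$ one has $D(V_0 + \cdots + V_k) \subset DV_0 + \cdots + DV_k$ by linearity, a sum of characteristic ideals is a characteristic ideal. Applying this to $V_i=\mathfrak{z}(\mathfrak{r}_i)$ yields that $\mathfrak{a}(\mathfrak{r})$ is characteristic in $\killing$.

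Finally, for the ``in particular'' clause, I would invoke the observation recorded right after the definition of characteristic ideal: if $\killing$ is an ideal of $\g$, then for every $Z \in \g$ the map $\mathrm{ad}_Z \colon \killing \to \killing$ is a well-defined derivation of $\killing$, so any characteristic ideal of $\killing$ is automatically stable under $\mathrm{ad}_Z$ for all $Z \in \g$, i.e., it is an ideal of $\g$. The main (and essentially only) subtle point is the abelianness step with the nested centers; the rest is bookkeeping on top of results already proved in the excerpt.
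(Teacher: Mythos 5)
Your proposal is correct and follows essentially the same route as the paper, whose ``proof'' of this proposition is exactly the chain of observations in the preceding paragraphs (each $\mathfrak{r}_i$ characteristic by induction from Winter's theorem, hence each $\mathfrak{z}(\mathfrak{r}_i)$ characteristic by the displayed Leibniz computation, and a sum of characteristic ideals is characteristic, with the $\mathrm{ad}_Z$ remark giving the ``in particular'' clause). The only addition is your explicit verification that $\mathfrak{a}(\mathfrak{r})$ is abelian via the nesting $\mathfrak{r}_j \subset \mathfrak{r}_i$ for $i \leq j$, which the paper asserts without comment; your argument for it is valid.
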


It turns out that the abelian ideal $\mathfrak{a}(\mathfrak{r})$ intersects nontrivially with each ideal of $\mathfrak{r}$. This turns out to be a useful property that will be utilized in the next section. 
\begin{prop} \label{prop:intersection}
Let $\mathfrak{i} \neq \{0\}$ be an ideal of $\mathfrak{r}$. Then $\mathfrak{i} \cap \mathfrak{a}(\mathfrak{r}) \neq \{0\}$. 
\end{prop}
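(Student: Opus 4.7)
The plan is to exploit the filtration $\mathfrak{r}=\mathfrak{r}_0 \supset \mathfrak{r}_1 \supset \cdots \supset \mathfrak{r}_k \supset \mathfrak{r}_{k+1}=\{0\}$ by intersecting $\mathfrak{i}$ with it and looking at the last nonzero intersection. Concretely, since $\mathfrak{i}\cap \mathfrak{r}_0 = \mathfrak{i}\neq\{0\}$ while $\mathfrak{i} \cap \mathfrak{r}_{k+1}=\{0\}$, there is a largest integer $j^*\in\{0,1,\ldots,k\}$ with $\mathfrak{i}\cap\mathfrak{r}_{j^*}\neq\{0\}$, and by maximality $\mathfrak{i}\cap\mathfrak{r}_{j^*+1}=\{0\}$.

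The key step is to show that $\mathfrak{i}\cap \mathfrak{r}_{j^*}\subset \mathfrak{z}(\mathfrak{r}_{j^*})$. For this, I would compute the bracket $[\mathfrak{i}\cap\mathfrak{r}_{j^*},\,\mathfrak{r}_{j^*}]$ in two different ways. On the one hand, this bracket is contained in $[\mathfrak{r}_{j^*},\mathfrak{r}_{j^*}]=\mathfrak{r}_{j^*+1}$. On the other hand, since $\mathfrak{i}$ is an ideal of $\mathfrak{r}$ and $\mathfrak{r}_{j^*}\subset\mathfrak{r}$, the bracket is also contained in $\mathfrak{i}$. Therefore
\[ [\mathfrak{i}\cap \mathfrak{r}_{j^*},\,\mathfrak{r}_{j^*}] \subset \mathfrak{i}\cap\mathfrak{r}_{j^*+1}=\{0\}, \]
which means every element of $\mathfrak{i}\cap\mathfrak{r}_{j^*}$ commutes with all of $\mathfrak{r}_{j^*}$. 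Since these elements lie in $\mathfrak{r}_{j^*}$ as well, they lie in $\mathfrak{z}(\mathfrak{r}_{j^*})$.

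Putting this together, we obtain
\[ \{0\}\neq \mathfrak{i}\cap \mathfrak{r}_{j^*} \subset \mathfrak{i}\cap \mathfrak{z}(\mathfrak{r}_{j^*}) \subset \mathfrak{i}\cap \mathfrak{a}(\mathfrak{r}),\]
which is the desired conclusion.

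I do not anticipate a serious obstacle here; the only point that requires a moment of care is verifying that $\mathfrak{i}\cap \mathfrak{r}_{j^*}$ is preserved by bracketing with $\mathfrak{r}_{j^*}$, which is immediate from $\mathfrak{i}$ being an ideal of $\mathfrak{r}\supset \mathfrak{r}_{j^*}$ together with the fact that $\mathfrak{r}_{j^*}$ is a characteristic (hence two-sided) ideal of $\mathfrak{r}$. An alternative route would be to pass first to the nonzero abelian ideal $\mathfrak{i}_m$ at the end of the derived sequence of $\mathfrak{i}$ and argue about that, but the direct filtration argument above is cleaner and uses the structure of $\mathfrak{a}(\mathfrak{r})$ more transparently.
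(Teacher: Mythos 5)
Your proof is correct and follows essentially the same route as the paper: locate the last index $j$ with $\mathfrak{i}\cap\mathfrak{r}_j\neq\{0\}$, observe that $[\mathfrak{i}\cap\mathfrak{r}_j,\mathfrak{r}_j]\subset\mathfrak{i}\cap\mathfrak{r}_{j+1}=\{0\}$, and conclude that $\mathfrak{i}\cap\mathfrak{r}_j\subset\mathfrak{z}(\mathfrak{r}_j)\subset\mathfrak{a}(\mathfrak{r})$. Your write-up is if anything slightly cleaner, handling the case $j=k$ uniformly rather than splitting it off as the paper does, and spelling out why the bracket lands in $\mathfrak{i}$.
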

\begin{proof}
Let $\mathfrak{r}_k$ be the last nontrivial element of the derived sequence of $\mathfrak{r}$. If $\mathfrak{i} \cap \mathfrak{r}_k \neq \{0\}$, then it is obvious that $\mathfrak{i} \cap \mathfrak{a}(\mathfrak{r}) \neq \{0\}$. If $\mathfrak{i} \cap \mathfrak{r}_k = \{0\}$, then there exists an integer $j < k$ such that $\mathfrak{i} \cap \mathfrak{r}_j \neq \{0\}$ and $\mathfrak{i} \cap \mathfrak{r}_{j+1} = \{0\}$. 
We have 
\[[\mathfrak{i} \cap \mathfrak{r}_j, \mathfrak{r}_j] \subset \mathfrak{i} \cap \mathfrak{r}_{j+1} = \{0\},\]
which implies that $\mathfrak{i} \cap \mathfrak{r}_j \subset \mathfrak{z}(\mathfrak{r}_j) \subset \mathfrak{a}(\mathfrak{r}).$
\end{proof}

\subsection{Killing horizons} \label{sect:killingintro}
Theorem \ref{th:R} explains how to construct a relative invariant from an ideal of $\g$, and Proposition \ref{prop:aideal} gives us one particular such ideal, $\mathfrak{a}(\mathfrak{r})$. It turns out that $\mathfrak{a}(\mathfrak{r})$ can be used to construct a relative differential invariant that always vanishes on $\killing$-invariant Killing horizons. This will be the focus of this section.

We use the following definition, adapted from \cite{mars2018multiple}:
\begin{definition}
    A Killing horizon on a Lorentzian manifold $(M,g)$ is a null hypersurface $\mathcal H \subset M$ for which there exists a Killing vector $K \in \killing$ which satisfies $g(K,K)|_{\mathcal H}=0$ and $K_a \in T_a \mathcal H \setminus \{0\}$ for each $a \in \mathcal{H}$. 
\end{definition}
The Killing vector $K$ in the definition is called a generator of the Killing horizon, and it satisfies $K_a^\perp = T_a \mathcal H$ for each $a \in \mathcal H$. More generally, we will call $K$ a generator of the Killing horizon $\mathcal{H}$ if $g(K,K)|_{\mathcal H}=0$ and $K_a \in T_a \mathcal H \setminus \{0\}$ for each $a$ in an open dense subset of $\mathcal{H}$. Let $\generators$ denote the union of all generators of $\mathcal{H}$ and the trivial Killing vector $0 \in \killing$. By Theorem 2 of \cite{mars2018multiple}, the set $\generators$ is a Lie subalgebra of $\killing$. If $\dim(\generators) >1$, then $\mathcal{H}$ is called a multiple Killing horizon. 
\begin{remark}
   We will not worry about topological properties of $\mathcal{H}$, for example whether it is an embedded or injectively immersed hypersurface. 
\end{remark}

In order to focus in on Killing horizons that are geometrically special, we introduce the following definition: 

\begin{definition}
    Let $(M,g)$ be a Lorentzian manifold, and $\killing$ the Lie algebra of its Killing vector fields. A Killing horizon $\mathcal{H} \subset M$ is $\killing$-invariant if each vector-field in $\killing$ is tangent to $\mathcal{H}$. That is,  $X_a \in T_a \mathcal{H}$ for each $X \in \killing$ and $a\in \mathcal{H}$.
\end{definition}

To motivate our focus on $\killing$-invariant Killing horizons, notice that if $\mathcal{H}$ is not $\killing$-invariant, then there exists a vector field $X \in \killing$ satisfying $X_a \notin T_a \mathcal{H}$ for some $a \in \mathcal{H}$.  In this case, the flow $\varphi_s$ of $X$ gives a continuous family $\mathcal{H}_s=\varphi_s^{-1}(\mathcal H)$ of distinct Killing horizons, at least in a neighborhood of $a$ in $M$. Since $\varphi_s$ is an isometry on $(M,g)$ for each $s$, there is no way to pick out a single, special Killing horizon using local arguments, as they are all geometrically (locally) equivalent. There are several well-known spacetimes that are foliated by Killing horizons in this way, the simplest example being the 2-dimensional Minkowski space $(\mathbb R^2, -dt^2+dx^2)$, with Killing algebra $\langle \partial_t, \partial_x, x\partial_t+t\partial_x\rangle$, and Killing horizons generated by $(x-x_0)\partial_t+(t-t_0) \partial_x$ for each $(t_0,x_0) \in \mathbb R^2$. 

We  start with a simple result concerning $\killing$-invariant Killing horizons. 
\begin{lemma} \label{lem:generator}
  Let $(M,g)$ be a Lorentzian manifold, and $\killing$ the Lie algebra of its Killing vector fields. Assume that there exists a Killing horizon $\mathcal{H} \subset M$ with generator $K$. 
  Then $\mathcal{H}$ is $\killing$-invariant if and only if $g(X,K)|_{\mathcal H}\equiv 0$ for every $X \in \killing$. 
 
\end{lemma}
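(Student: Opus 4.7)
The plan is to exploit the standard fact that on a null hypersurface $\mathcal H$ with null generator $K$ (a vector field tangent to $\mathcal H$ such that $g(K,K)|_{\mathcal H}=0$), the tangent space at a point $a$ where $K_a\neq 0$ is exactly the orthogonal complement $K_a^\perp$. This is the key geometric input and makes both directions almost automatic; indeed, the paper even states this just before the lemma ($K_a^\perp = T_a\mathcal H$). Let $U\subset \mathcal H$ denote the open dense subset on which $K$ is nonzero (its existence is built into the paper's generalized definition of a generator).

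For the forward implication, assume $\mathcal H$ is $\killing$-invariant. Then for every $X\in\killing$ and every $a\in U$, we have $X_a\in T_a\mathcal H = K_a^\perp$, so $g(X,K)_a=0$. Since $g(X,K)$ is a smooth function on $M$ and $U$ is dense in $\mathcal H$, the identity $g(X,K)|_{\mathcal H}\equiv 0$ follows by continuity.

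For the reverse implication, assume $g(X,K)|_{\mathcal H}\equiv 0$ for every $X\in\killing$. At any point $a\in U$, this says $X_a\in K_a^\perp = T_a\mathcal H$, giving tangency of $X$ to $\mathcal H$ on the open dense subset $U$. To upgrade this to all of $\mathcal H$, I would use a local defining function: near any $a\in\mathcal H$ choose a smooth $\rho$ with $\mathcal H=\{\rho=0\}$ and $d\rho\neq 0$ along $\mathcal H$. Tangency of $X$ to $\mathcal H$ at points of $U$ is equivalent to $X(\rho)|_U=0$, and since $X(\rho)$ is smooth and $U$ is dense in $\mathcal H$, this extends to $X(\rho)|_{\mathcal H}=0$, i.e., $X_a\in T_a\mathcal H$ for every $a\in\mathcal H$.

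The only mildly delicate point is the density argument, which is why the paper's relaxed definition of ``generator'' (nonvanishing on an open dense subset) matters: all of the pointwise computations happen on $U$, and everything of interest is a smooth function so that the conclusions extend to all of $\mathcal H$ by continuity. Apart from this continuity step, the proof is a direct consequence of the fact that, for a null hypersurface, ``tangent'' and ``orthogonal to the null normal'' are the same condition.
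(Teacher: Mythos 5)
Your proof is correct and follows essentially the same route as the paper's: both directions reduce to the identity $T_a\mathcal H = K_a^\perp$ at points where $K_a\neq 0$, followed by a density-and-continuity argument to cover all of $\mathcal H$. Your use of a local defining function just makes explicit the continuity step that the paper states more briefly.
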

\begin{proof}
    The Killing horizon $\mathcal{H}$ is $\killing$-invariant if and only if $X_a \in T_a \mathcal H$ for every $X \in \killing$ and every $a \in \mathcal H$. Since $K$ is a generator of $\mathcal{H}$, we have $K_a \in (T_a \mathcal{H})^\perp$ for each $a \in \mathcal H$, implying that $g(X,K)|_{\mathcal{H}}\equiv 0$ for every $X \in \killing$. Conversely, if $g(X,K)|_{\mathcal{H}} \equiv 0$, then $X_a \in K_a^\perp$ for each $a \in \mathcal{H}$, implying that $X_a \in T_a \mathcal{H}$ for each $a$ in an open dense subset of $\mathcal{H}$. By continuity of the vector field $X$, $\mathcal{H}$ is $\killing$-invariant. 
\end{proof}

The following proposition is a consequence of Lemma \ref{lem:generator}.

\begin{prop}\label{prop:Rig}
    Let $(M,g)$ be a Lorentzian manifold, $\killing$ its Lie algebra of Killing vectors, and assume that there exists a $\killing$-invariant Killing horizon $\mathcal{H} \subset M$. Let $\mathfrak i = \langle K_1, \dots, K_r\rangle$ be an ideal of $\g$ satisfying $\mathfrak{i} \cap \generators \neq \emptyset$. Then the function 
    \[R^{\mathfrak{i}}_g=\| K_1 \wedge \cdots \wedge K_r \|_g^2\] 
    vanishes on $\mathcal{H}$. Furthermore, $R^{\mathfrak{i}}_g$ is simply the restriction of the relative invariant of Theorem \ref{th:R} to $g$: $R^{\mathfrak{i}}_g = R^{\mathfrak{i}} \circ g$. 
\end{prop}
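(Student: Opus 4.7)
The plan is to reduce $R^{\mathfrak{i}}_g$ to a Gram determinant and then arrange for one of its rows to vanish identically on $\mathcal{H}$. Since the wedge product is alternating and $\|\cdot\|_g^2$ is homogeneous of degree two in the $r$-vector, any change of basis $K'_i = A_i^j K_j$ of $\mathfrak{i}$ multiplies $R^{\mathfrak{i}}_g$ by $\det(A)^2 \neq 0$ and so leaves its zero locus unchanged. Using the hypothesis that $\mathfrak{i} \cap \generators$ contains a nonzero generator of $\mathcal{H}$, I would pick a basis $K_1, \dots, K_r$ of $\mathfrak{i}$ with $K_1 \in \generators$.

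With this normalisation, expanding the squared norm of an $r$-vector gives
\[ R^{\mathfrak{i}}_g = \det\bigl(g(K_i, K_j)\bigr)_{i,j=1}^r, \]
so it is enough to show $g(K_1, K_j)|_{\mathcal{H}} = 0$ for every $j$. For $j = 1$ this is the defining property of $K_1$ being a generator of the null hypersurface $\mathcal{H}$. For $j \geq 2$, in the setting where $\mathfrak{i} \subset \killing$ each $K_j$ is itself a Killing vector, and Lemma \ref{lem:generator} together with the $\killing$-invariance of $\mathcal{H}$ yields $g(K_1, K_j)|_{\mathcal{H}} = 0$. The entire first row of the Gram matrix then vanishes on $\mathcal{H}$, forcing the determinant, and hence $R^{\mathfrak{i}}_g$, to vanish there as well.

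For the final identification $R^{\mathfrak{i}}_g = R^{\mathfrak{i}} \circ g$, I would appeal to the tautological property of the horizontal 2-form $h$ recalled in Section \ref{sect:jets}: $h \circ g = g$, and each prolongation $K_i^{(0)}$ projects under $(\pi_0)_*$ to $K_i$. Since $h$ is horizontal, this gives $h(K_i^{(0)}, K_j^{(0)}) \circ g = g(K_i, K_j)$, and comparing the two Gram determinants — one on $\E^0 \subset J^0 \pi$ computed with $h$, the other on $M$ computed with $g$ — yields the claimed equality.

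The only delicate point is the step ensuring that each $K_j$ is tangent to $\mathcal{H}$, which is what makes $g(K_1, K_j)$ vanish on $\mathcal{H}$ (since $K_1$ is a null normal there). This tangency uses $\killing$-invariance of $\mathcal{H}$ applied to $K_j$, and so requires $\mathfrak{i} \subset \killing$. This is automatic whenever $\mathfrak{i}$ is a characteristic ideal of $\killing$, and in particular for the case $\mathfrak{i} = \mathfrak{a}(\mathfrak{r})$ used in Theorem \ref{th:invariantH}, where nontrivial intersection $\mathfrak{a}(\mathfrak{r}) \cap \generators \neq \{0\}$ is guaranteed by Proposition \ref{prop:intersection}.
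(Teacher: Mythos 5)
Your proof is correct and follows essentially the same route as the paper's: normalise the basis so that $K_1 \in \generators$, expand the squared norm as a Gram determinant whose first row consists of the quantities $g(K_1,K_j)$, kill that row on $\mathcal{H}$ via Lemma \ref{lem:generator} and $\killing$-invariance, and obtain the identification with $R^{\mathfrak{i}}\circ g$ from the tautological property $h\circ g=g$ (the paper simply cites Theorem \ref{th:R} for this last step). Your remark that the argument needs each $K_j$ to be tangent to $\mathcal{H}$, hence implicitly $\mathfrak{i}\subset\killing$, is a fair observation about a hypothesis the paper leaves tacit; it holds in the intended application $\mathfrak{i}=\mathfrak{a}(\mathfrak{r})$.
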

\begin{proof}
We assume without loss of generality that $K_1 \in \generators$. The quantity $g(K_1 \wedge \cdots \wedge K_r,K_1 \wedge \cdots \wedge K_r)$ can be expressed in terms of a sum of products of $g(K_i,K_j)$ where each term in the sum contains a factor of the form $g(K_1,K_j)$. Since each $X \in \killing$ is tangent to $\mathcal{H}$ (by $\killing$-invariance), and since $K_1$ is a generator of the null-hypersurface $\mathcal{H}$, it follows from Lemma \ref{lem:generator} that $g(K_1,X)|_{\mathcal{H}}=0$ for every $X \in \killing$. Thus \[(R^{\mathfrak{i}}_g)|_{\mathcal{H}}=g(K_1 \wedge \cdots \wedge K_r,K_1 \wedge \cdots \wedge K_r)|_{\mathcal{H}} =0.\]
The last sentence of the proposition follows directly from Theorem \ref{th:R} (and this is the only place we use that $\mathfrak{i}$ is $\g$-invariant). 
\end{proof}
\begin{remark}
    Our focus is on $\killing$-invariant Killing horizons for the reasons explained above. Still, it is worth pointing out that if $\mathfrak{i}$ is also an ideal of $\killing$, it is sufficient that $\mathcal{H}$ is $\mathfrak{i}$-invariant (not necessarily $\killing$-invariant) for the conclusion of Proposition \ref{prop:Rig} to hold. 
\end{remark}

 When $\mathcal{H}$ is a $\killing$-invariant Killing horizon, the Lie algebra $\generators$ is not only a Lie subalgebra, but also an ideal in $\killing$. Before we prove that, we recall the following lemma from Appendix A of \cite{mars2018multiple}. 
\begin{lemma}[\cite{mars2018multiple}] \label{lem:codim2}
    If $X \neq 0$ is a Killing vector of a Lorentzian manifold $(M,g)$, then the subset on which $X$ vanishes has codimension at least 2. 
\end{lemma}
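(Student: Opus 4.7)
The plan is to argue by contradiction, assuming the zero set $N = \{p \in M \mid X_p = 0\}$ contains a smooth embedded hypersurface $\Sigma$, and then to show that this forces $X$ to vanish identically on $M$, contradicting $X \neq 0$. The two ingredients are (i) a pointwise linear-algebra computation at a point $p \in \Sigma$ showing $\nabla X$ vanishes there, and (ii) the classical unique-continuation result for Killing vectors.

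First I would pick $p \in \Sigma$ and consider $\nabla X$ at $p$ as a $(0,2)$-tensor $\nabla_a X_b$. Because $X$ vanishes identically along $\Sigma$, for any $Y \in T_p\Sigma$ the directional derivative $\nabla_Y X = 0$ at $p$; that is, $(\nabla X)(Y, \cdot) = 0$ for every $Y$ in the codimension-one subspace $T_p\Sigma \subset T_pM$. The Killing equation $\nabla_a X_b + \nabla_b X_a = 0$ tells us $\nabla X$ is antisymmetric. A standard linear-algebra observation then finishes step (i): an antisymmetric 2-tensor on $T_pM$ whose contraction with every vector in a hyperplane $V$ vanishes must be zero, because splitting $T_pM = V \oplus \mathbb{R}e$ and expanding $(\nabla X)(v_1 + a_1 e,\, v_2 + a_2 e)$ shows every term is killed by either the hypothesis or antisymmetry. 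Hence $(\nabla X)_p = 0$, and of course $X_p = 0$.

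Next I would invoke the classical fact (going back to Kostant/Nomizu) that a Killing vector field on a connected manifold is uniquely determined by the pair $(X_p, (\nabla X)_p)$ at any single point $p$. This follows from prolonging the Killing system: the integrability identity $\nabla_a \nabla_b X_c = R_{bca}{}^d X_d$ turns the pair $(X, \nabla X)$ into the solution of a linear ODE along each geodesic emanating from $p$, with vanishing initial data forcing the trivial solution. Applying this at our $p$, where both $X_p$ and $(\nabla X)_p$ vanish, yields $X \equiv 0$ in a geodesically convex neighborhood of $p$, and then on all of $M$ by a standard connectedness argument, contradicting $X \neq 0$. Therefore $N$ contains no hypersurface, so it has codimension at least $2$.

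The main obstacle, or really the only nontrivial input, is the unique-continuation step: turning the overdetermined Killing PDE into an ODE for the jet $(X, \nabla X)$ requires the identity expressing $\nabla^2 X$ in terms of curvature and $X$. The other subtlety to flag is what "codimension at least $2$" means for a zero set that is a priori only closed: the argument above rules out $N$ containing any smooth embedded hypersurface, which is the operative interpretation used in \cite{mars2018multiple} and sufficient for the applications in the sequel.
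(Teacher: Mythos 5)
Your argument is correct. Note first that the paper does not prove this lemma at all --- it is quoted verbatim from Appendix~A of the cited reference \cite{mars2018multiple} --- so there is no in-paper proof to compare against; what you have supplied is a self-contained justification of the citation. Both of your steps are sound: the pointwise linear algebra (an antisymmetric bilinear form annihilating a hyperplane is zero) is right, and the unique-continuation input via the prolonged system $\nabla_a\nabla_b X_c = R_{bca}{}^{d}X_d$ is exactly the classical Kostant--Nomizu fact that a Killing field on a connected manifold is determined by its $1$-jet at a point. One small remark on the logical shape: your contrapositive only rules out the zero set \emph{containing} an embedded hypersurface, which you correctly flag as the operative meaning and which suffices for every use of the lemma in this paper (e.g.\ the injectivity of $\killing\to\killing|_{\mathcal H}$). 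The argument usually given in the source is the direct local version, which is marginally stronger: at a zero $p$ where $(\nabla X)_p\neq 0$, antisymmetry forces $\mathrm{rank}\,(\nabla X)_p\geq 2$, and the zero set near $p$ is contained in $\exp_p(\ker(\nabla X)_p)$, hence locally lies inside a submanifold of codimension at least $2$; at a zero where $(\nabla X)_p=0$ your unique-continuation step gives $X\equiv 0$. Either route is acceptable; yours trades the slightly finer structural conclusion for a cleaner contradiction.
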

This lemma turns out to be quite useful. For example, if $\mathcal{H} \subset M$ is a $\killing$-invariant hypersurface, then there exists a Lie algebra homomorphism
\[ \killing \to \killing|_{\mathcal{H}},\]   
defined by restricting the vector fields of $\killing$ to $\mathcal{H}$ ($\killing|_{\mathcal{H}}$ should not be confused with $\generators$). From Lemma \ref{lem:codim2} it follows that this is  a Lie algebra isomorphism.

\begin{prop}\label{prop:generatorsideal}
    Let $(M,g)$ be a Lorentzian manifold, and $\killing$ the Lie algebra of its Killing vector fields. For any $\killing$-invariant Killing horizon $\mathcal{H} \subset M$, the Lie algebra $\generators$ of generators of $\mathcal{H}$ forms an ideal in $\killing$.   
\end{prop}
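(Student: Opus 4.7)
The plan is to verify the three defining properties of a generator for $[X, K]$, where $K \in \generators$ and $X \in \killing$ are arbitrary. First, $[X, K]$ is itself a Killing vector, being the bracket of two Killing vectors. Second, since $\mathcal{H}$ is $\killing$-invariant, both $X$ and $K$ are tangent to $\mathcal{H}$, and hence so is $[X, K]$.

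For the null condition $g([X, K], [X, K])|_{\mathcal{H}} = 0$, I would apply $L_X$ twice to the scalar $g(K, K)$ and use $L_X g = 0$ together with $L_X K = [X, K]$ to obtain
\[
L_X^{2}\, g(K, K) \;=\; 2\, g([X, [X, K]], K) \;+\; 2\, g([X, K], [X, K]).
\]
Since $g(K, K)$ vanishes on $\mathcal{H}$ and $X$ is tangent to $\mathcal{H}$, the restrictions $L_X g(K, K)|_{\mathcal{H}}$ and $L_X^{2} g(K, K)|_{\mathcal{H}}$ both vanish. Moreover, $[X, [X, K]] \in \killing$, so Lemma \ref{lem:generator} (applied to the generator $K$ of the $\killing$-invariant horizon $\mathcal{H}$) gives $g([X, [X, K]], K)|_{\mathcal{H}} = 0$. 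Rearranging yields $g([X, K], [X, K])|_{\mathcal{H}} = 0$. For non-triviality: if $[X, K] = 0$ then $[X, K] \in \generators$ by convention; otherwise Lemma \ref{lem:codim2} forces its zero set in $M$ to have codimension at least $2$, so $[X, K]$ is nonzero on an open dense subset of $\mathcal{H}$. These three conditions together place $[X, K]$ in $\generators$.

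The main obstacle is the nullity step, and it is here that the $\killing$-invariance of $\mathcal{H}$ is used twice in essential ways: once to ensure $X$ is tangent to $\mathcal{H}$ (killing the $L_X^{2} g(K, K)|_{\mathcal{H}}$ term), and once to invoke Lemma \ref{lem:generator} and eliminate the cross-term $g([X, [X, K]], K)|_{\mathcal{H}}$. An alternative route would be to exponentiate: since $X$ is a Killing field tangent to $\mathcal{H}$, its local flow $\varphi_s$ is a one-parameter family of local isometries preserving $\mathcal{H}$, so $K_s := \varphi_s^* K$ lies in $\generators$ for each $s$, and differentiating at $s=0$ yields $[X, K] \in \generators$; this requires knowing in advance that $\generators$ is a closed linear subspace of the finite-dimensional $\killing$, which follows from Theorem~2 of \cite{mars2018multiple} together with Lemma \ref{lem:codim2}. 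The direct computation above avoids that detour.
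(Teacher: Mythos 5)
Your proof is correct and follows essentially the same route as the paper: both arguments combine the Leibniz rule for $L_X$ with $L_Xg=0$, the tangency of $X$ to $\mathcal{H}$, Lemma \ref{lem:generator}, and Lemma \ref{lem:codim2}. The only cosmetic difference is that the paper differentiates $g(K,Z)$ once for arbitrary $Z\in\killing$ (obtaining $g([X,K],Z)|_{\mathcal{H}}=0$ for all $Z$ at once), whereas you differentiate $g(K,K)$ twice and then eliminate the cross-term; setting $Z=[X,K]$ in the paper's identity recovers exactly your computation.
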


\begin{proof}
 
 For $X \in \killing$ we have $L_X g=0$, implying that 
   \[L_X(g(Y,Z)) = g([X,Y],Z) + g(Y,[X,Z]), \qquad X,Y,Z \in \killing.\]
   If $K \in \generators$, then $g(K,Z)|_{\mathcal{H}}\equiv 0$ for every $Z \in \killing$, by Lemma \ref{lem:generator}. By setting $Y=K$ in the above equation, we see that
   \[g([X,K],Z)|_{\mathcal{H}}=L_X(g(K,Z))|_{\mathcal{H}}=0, \qquad X,Z \in \killing. \]
   The last equallity holds since $g(K,Z)|_{\mathcal{H}}=0$ and $X$ is tangent to $\mathcal{H}$. Thus $[X,K]$ is either a generator for $\mathcal{H}$ or it vanishes on $\mathcal{H}$ (in which case it vanishes everywhere, by Lemma \ref{lem:codim2}).   
\end{proof}

Theorem 3 of \cite{mars2018multiple} says that if $m = \dim \generators \geq 2$, then there exists an abelian ideal $\generatorsdeg \subset \generators$ of dimension at least $m-1$. Thus $\generators$ is solvable, and if $\mathcal{H}$ is $\killing$-invariant, then $\generators$ is a solvable ideal of $\killing$. It follows that $\generators$ is an ideal of the radical $\mathfrak{r}$ of $\killing = \mathfrak{s} \ltimes \mathfrak{r}$. Thus Proposition \ref{prop:generatorsideal} leads to the following corollary: 

\begin{cor}
   Let  $(M,g)$ be a Lorentzian manifold with Killing algebra $\killing$, and let $\mathfrak{s} \ltimes \mathfrak{r}$ be the Levi decomposition of $\killing$. If $\mathcal{H} \subset M$ is a $\killing$-invariant Killing horizon, then the ideal $\generators$ of $\killing$ is contained in $\mathfrak{r}$. In particular, $\killing$ has a nontrivial radical ideal. 
\end{cor}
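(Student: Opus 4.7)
The plan is to combine the structural results already developed in this subsection with the standard characterization of the radical as the unique maximal solvable ideal. First, I would invoke Proposition \ref{prop:generatorsideal} to conclude that $\generators$ is an ideal of $\killing$. The content to supply is then just the solvability of $\generators$, together with the fact that $\generators$ is nontrivial whenever a Killing horizon exists.

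To show solvability, I would split on $m = \dim \generators$. If $m \leq 1$, then $\generators$ is abelian and therefore solvable. If $m \geq 2$, I would quote Theorem 3 of \cite{mars2018multiple} which provides an abelian ideal $\generatorsdeg \subset \generators$ with $\dim \generatorsdeg \geq m-1$. Then $\generators / \generatorsdeg$ has dimension at most $1$, hence is abelian, which makes $\generators$ solvable of derived length at most $2$. In both cases, $\generators$ is a solvable ideal of $\killing$.

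Since the radical $\mathfrak{r}$ in the Levi decomposition $\killing = \mathfrak{s} \ltimes \mathfrak{r}$ is by definition the largest solvable ideal of $\killing$, the inclusion $\generators \subset \mathfrak{r}$ is immediate. For the last sentence of the corollary, I would observe that by the very definition of a Killing horizon $\mathcal{H}$ there is at least one nonzero generator, so $\generators \neq \{0\}$, and hence $\mathfrak{r} \supset \generators$ is nontrivial.

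I do not anticipate any substantive obstacle: the two nontrivial inputs (that $\generators$ is an ideal in $\killing$, and that it carries an abelian ideal of codimension at most one) have already been established in Proposition \ref{prop:generatorsideal} and imported from \cite{mars2018multiple}, respectively. The corollary is essentially a repackaging of these two facts via the universal property of the radical, and the only care needed is the trivial case $\dim \generators = 1$, which must be handled separately from the application of Theorem 3 of \cite{mars2018multiple}.
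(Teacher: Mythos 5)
Your proposal is correct and follows essentially the same route as the paper: Proposition \ref{prop:generatorsideal} gives that $\generators$ is an ideal of $\killing$, Theorem 3 of \cite{mars2018multiple} gives solvability via the codimension-one abelian ideal, and maximality of the radical yields $\generators \subset \mathfrak{r}$. Your explicit handling of the case $\dim \generators \leq 1$ is a small point the paper leaves implicit, but it changes nothing substantive.
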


Let us take the ideal $\mathfrak{a}(\mathfrak{r})$ from Proposition \ref{prop:aideal}. Since $\generators$ is an ideal of $\mathfrak{r}$, it follows from  Proposition \ref{prop:intersection} that $\generators \cap \mathfrak{a}(\mathfrak{r}) \neq \{0\}$. Proposition \ref{prop:Rig} then gives us a relative invariant that vanishes on $\mathcal{H}$. We summarize this in a theorem: 
\begin{theorem}\label{th:invariantH}
    Let $(M,g)$ be a Lorentzian manifold with Killing algebra $\killing = \mathfrak{s} \ltimes \mathfrak{r}$, let $\g$ be a Lie algebra of vector fields on $M$ containing $\killing$ as an ideal, and let $\mathfrak{a}(\mathfrak{r}) = \langle K_1,\dots, K_r\rangle$ be the abelian ideal defined by \eqref{eq:a(r)}. If $\mathcal{H}$ is a $\killing$-invariant Killing horizon, then the function 
    \[ R_g = \| K_1 \wedge \cdots \wedge K_r \|_g^2\]
    vanishes on $\mathcal{H}$. The function $R_g$ is the restriction of the relative invariant of Theorem \ref{th:R} to $g$: $R_g = R^{\mathfrak{a}(\mathfrak{r})} \circ g$. 
\end{theorem}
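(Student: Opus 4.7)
The plan is to assemble the theorem as a short chain of implications from results already proved in the excerpt; no new geometric input is required, only a careful matching of hypotheses. The whole point of the abelian ideal $\mathfrak{a}(\mathfrak{r})$ was to make Proposition \ref{prop:intersection} available, and the whole point of Proposition \ref{prop:generatorsideal} (and its corollary) was to show that the generators $\generators$ of a $\killing$-invariant Killing horizon sit inside $\mathfrak{r}$, so the two pieces interlock automatically. My job is therefore essentially to make sure that the ambient Lie algebra in each invocation is the right one.

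Concretely, first I would invoke the Corollary following Proposition \ref{prop:generatorsideal}: under $\killing$-invariance of $\mathcal{H}$ the Lie algebra $\generators$ is a nontrivial solvable ideal of $\killing$ contained in the radical $\mathfrak{r}$. In particular $\generators$ is a nontrivial ideal of $\mathfrak{r}$, so Proposition \ref{prop:intersection} applied with $\mathfrak{i} = \generators$ produces a nonzero vector in $\generators \cap \mathfrak{a}(\mathfrak{r})$. Next, by Proposition \ref{prop:aideal}, $\mathfrak{a}(\mathfrak{r})$ is a characteristic ideal of $\killing$ and hence an ideal of $\g$, so the hypotheses of Proposition \ref{prop:Rig} are satisfied by $\mathfrak{i} = \mathfrak{a}(\mathfrak{r}) = \langle K_1,\dots,K_r\rangle$. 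That proposition then yields
\[
R_g = \|K_1 \wedge \cdots \wedge K_r\|_g^2 \big|_{\mathcal{H}} \equiv 0,
\]
which is the vanishing statement. The identification $R_g = R^{\mathfrak{a}(\mathfrak{r})} \circ g$ is the last sentence of Proposition \ref{prop:Rig} (ultimately a consequence of $h\circ g = g$, so that the horizontal inner product of the prolonged vector fields $K_i^{(0)}$ pulls back along $g$ to the $g$-inner product of $K_i$).

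Since every step is a direct invocation, I do not expect any real technical obstacle. The only place where one must be slightly careful is verifying the chain of ideals $\mathfrak{a}(\mathfrak{r}) \triangleleft \killing \triangleleft \g$: Proposition \ref{prop:aideal} provides that $\mathfrak{a}(\mathfrak{r})$ is a characteristic ideal of $\killing$, and the promotion to ideal of $\g$ uses that characteristic ideals are preserved by every derivation of $\killing$, in particular by $\mathrm{ad}_Z$ for $Z \in \g$ (which is where the hypothesis that $\killing$ is an ideal of $\g$ enters). This is the subtlety that justifies why $\mathfrak{a}(\mathfrak{r})$, rather than an arbitrary abelian ideal of $\mathfrak{r}$, is the correct choice: only then does Theorem \ref{th:R} apply to produce a bona fide relative $\g$-invariant whose restriction is $R_g$.
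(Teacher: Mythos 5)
Your proof is correct and follows essentially the same route as the paper: the corollary to Proposition \ref{prop:generatorsideal} places $\generators$ inside $\mathfrak{r}$ as a nontrivial ideal, Proposition \ref{prop:intersection} then gives $\generators \cap \mathfrak{a}(\mathfrak{r}) \neq \{0\}$, and Proposition \ref{prop:Rig} (applicable because $\mathfrak{a}(\mathfrak{r})$ is an ideal of $\g$ by Proposition \ref{prop:aideal}) yields both the vanishing on $\mathcal{H}$ and the identification $R_g = R^{\mathfrak{a}(\mathfrak{r})} \circ g$. This is exactly the chain of invocations the paper uses in the paragraph preceding the theorem.
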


We have thus shown that the relative differential invariant $R^{\mathfrak{a}(\mathfrak{r})}$
defined on $\E^0 \subset J^0 \pi$ detects $\killing$-invariant Killing horizons, in the sense that the function $R_g = R \circ g \colon M \to \mathbb R$ vanishes on them. 

Note that while $R_g$ always vanishes on $\killing$-invariant Killing horizons, it is possible that a hypersurface given by $R_g=0$ is not a Killing horizon. An example of this can be seen in Section \ref{sect:eichhorn}. It is also possible in general that $\{R_g=0\} \subset M$ is not a hypersurface at all. We end this section by looking at two examples where $R_g$ vanishes everywhere on $M$, for two different reasons.

\begin{example}
    The pp-wave 
    \[ g= dx^2+dy^2+(2 dv + H(x,y,u) du) du\]
    on $\mathbb R^4$ has Killing algebra $\killing = \langle \partial_v \rangle$ for generic $H$. In this case $\mathfrak{a}(\mathfrak{r}) = \killing$, and the function $R_g= \|\partial_v \|_g^2$ vanishes everywhere. This can be explained by the fact that $(\mathbb R^4,g)$ is foliated by $\killing$-invariant Killing horizons. They are the level sets of the coordinate function $u$, and they have generator $\partial_v$. 
\end{example}

\begin{example}
    The particular highly symmetric pp-wave (see \cite[p.~194]{kruglikovthe2014gap})
    \[ g = dx^2+dy^2+(2dv+x^2 du)du\]
    on $\mathbb R^4$ has the six-dimensional Killing algebra 
    \[ \killing = \langle \partial_y, \partial_u, \partial_v, u \partial_y-y \partial_v, e^{-u}(\partial_x+x\partial_v), e^{u}(\partial_x-x\partial_v) \rangle.\]
    The abelian ideal $\mathfrak{a}(\mathfrak{r}) = \langle \partial_y,\partial_v \rangle$ gives $R_g=\| \partial_y \wedge \partial_v\|_g^2$. Since $\killing$ is transitive, there are no $\killing$-invariant Killing horizons. Still, it is clear that $R_g$ vanishes everywhere. This is because $\partial_y$ and $\partial_v$ are both tangent to the Killing horizons $u=\mathrm{const}$. 
\end{example}

\section{Analysis of orbits on $\E^0$ and $\E^1$}
\label{sect:examples}
In this section we collect examples of computation. We look at cases where $\killing$ is isomorphic to the Lie algebras
\[\mathfrak{so}(3) \oplus \mathbb R, \quad  \mathbb R^2,\quad  \mathfrak{sol}(2), \quad \mathfrak{so}(3),\] 
which is the case for several well-known metrics. For each of these cases, we will compute the Lie algebra $\g$ of vector fields preserving the coordinate form of $\killing$, and a PDE $\mathcal{E}$ whose solutions have Killing algebras containing $\killing$. Then we will analyse the orbits on $\E^0$, and in some cases also on $\E^1$. From this analysis we find a relative invariant that vanishes on physically important hypersurfaces. In the first example we will be quite detailed in order to clearly explain how the computations are done, while the later examples we will be more brief. 

When $\killing$ has a nontrivial ideal, in all the examples we examine, we will see that the relative invariant $R^{\mathfrak{a}(\mathfrak{r})}$  vanishes on the $\g^{(0)}$-invariant hypersurface in $\E^0 \subset J^0 \pi$ containing orbits of submaximal dimension. 
\subsection{Static and spherically symmetric ($\killing = \mathfrak{so}(3) \oplus \mathbb R$)} \label{sect:Schwarzschild}
Consider the Lie algebra 
\[ \killing = \left\langle \partial_t, \partial_{\varphi}, \sin(\varphi) \partial_\theta+\frac{\cos(\varphi)}{\tan(\theta)} \partial_\varphi,\cos(\varphi) \partial_\theta-\frac{\sin(\varphi)}{\tan(\theta)} \partial_\varphi\right\rangle\]
of vector fields defined on $\mathbb R \times (0,\infty) \times (0,\pi) \times (0, 2\pi)$, where $t,r,\theta,\phi$ are the coordinates on each factor, respectively.
\begin{prop} \label{prop:Schwarzschild}
The Lie algebra $\g$ preserving the Lie algebra $\killing$ of Killing vectors is infinite-dimensional, and spanned by the vector fields 
\[ t \partial_t, \quad a(r) \partial_t, \quad b(r) \partial_r, \quad   a,b \in C_{\mathrm{loc}}^\infty((0,\infty))\]
in addition to the last three generators of $\killing$:
\[\partial_{\varphi}, \quad  \sin(\varphi) \partial_\theta+\frac{\cos(\varphi)}{\tan(\theta)} \partial_\varphi, \quad \cos(\varphi) \partial_\theta-\frac{\sin(\varphi)}{\tan(\theta)} \partial_\varphi.\]
\end{prop}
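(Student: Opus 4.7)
The plan is to compute the normalizer of $\killing$ in $\mathcal{D}(M)$ by direct imposition. I would start with a general vector field
\[ X = \alpha \partial_t + \beta \partial_r + \gamma \partial_\theta + \delta \partial_\varphi \]
with smooth coefficients $\alpha, \beta, \gamma, \delta$ on $M$, and systematically enforce $[X, K_i] \in \killing$ for each of the four generators of $\killing$. For each $i$, the right-hand side $[X,K_i]$ is a \emph{fixed} element of $\killing$, i.e.\ a linear combination of the basis of $\killing$ with constants (depending on $X$ but not on the point) as coefficients. Throughout, I would freely subtract off elements of $\killing$ from $X$, which is allowed because $\killing$ is an ideal in $\g$.

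First, $[X, \partial_t] = -(\alpha_t\partial_t + \beta_t\partial_r + \gamma_t\partial_\theta + \delta_t\partial_\varphi)$ must lie in $\killing$. Since $\killing$ contains no $\partial_r$ direction, this forces $\beta_t = 0$; matching the $\partial_t$ component forces $\alpha_t$ to be constant, so $\alpha$ is affine in $t$; and the $\partial_\theta, \partial_\varphi$ parts impose that $\gamma_t$ and $\delta_t$ have the specific $(\theta,\varphi)$-dependence dictated by the $\mathfrak{so}(3)$ generators. An entirely analogous analysis of $[X,\partial_\varphi] \in \killing$ constrains the $\varphi$-dependence, and in particular gives $\beta_\varphi = 0$.

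Next comes the main step: imposing $[X, K_3] \in \killing$ and $[X, K_4] \in \killing$. The crucial fact I would use is that $\mathfrak{so}(3) = \langle K_2, K_3, K_4\rangle$ acts transitively on the spherical slice with trivial outer normalizer — concretely, any vector field $Y = \gamma(\theta,\varphi)\partial_\theta + \delta(\theta,\varphi)\partial_\varphi$ on $S^2$ whose brackets with $K_2, K_3, K_4$ all lie in $\mathfrak{so}(3)$ must itself lie in $\mathfrak{so}(3)$. Applied fibrewise in $(t,r)$, and combined with the earlier constraints on the $t$- and $\varphi$-dependence, this forces the spherical part of $X$ to be a linear combination of $K_2, K_3, K_4$ with coefficients that are forced to be genuine constants, hence absorbable into $\killing$. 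Together with $\beta_\theta = 0$ (from $[X,K_3],[X,K_4] \in \killing$ matched against the absence of $\partial_r$ in $\killing$), this reduces $X$ modulo $\killing$ to $\alpha(t,r)\partial_t + \beta(r)\partial_r$ with $\alpha$ affine in $t$. The affine-in-$t$ part splits as $t\partial_t$ (from the $t$-coefficient) and $a(r)\partial_t$ (from the constant-in-$t$ part), while the $\partial_r$ piece yields $b(r)\partial_r$. Adding back the three $\mathfrak{so}(3)$-generators from $\killing$ produces exactly the list in the proposition. Finally, a direct verification that each listed vector field does normalize $\killing$ completes the argument.

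The main obstacle is the rigidity statement in the third step: that the normalizer of $\mathfrak{so}(3)$ in $\mathrm{Vect}(S^2)$ equals $\mathfrak{so}(3)$ itself. In principle this is classical, but one still needs some genuine calculation — either a Fourier expansion in $\varphi$ (using $[Y,\partial_\varphi]\in\mathfrak{so}(3)$ to restrict to Fourier modes $e^{\pm i\varphi}$ and constants) and then analyzing the $\theta$-ODEs obtained from $[Y,K_3],[Y,K_4]\in\mathfrak{so}(3)$, or else invoking representation-theoretic rigidity of the $\mathfrak{so}(3)$-action. The remaining steps are essentially bookkeeping.
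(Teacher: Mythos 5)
The paper states Proposition \ref{prop:Schwarzschild} without proof, so there is no argument of the authors' to compare against; judged on its own, your proposal is correct and complete in outline. The bookkeeping steps all work as you describe: $[X,\partial_t]\in\killing$ and $[X,\partial_\varphi]\in\killing$ give $\beta_t=\beta_\varphi=0$ and $\alpha_t$ constant; the $\partial_r$-components of $[X,K_3]$, $[X,K_4]$ then give $\beta_\theta=0$, so $\beta=b(r)$; and the $\partial_t$-components give $K_i(\alpha)=\mathrm{const}$ for $i=2,3,4$, which (by bracketing, e.g.\ $[K_2,K_3](\alpha)=0$ while $[K_2,K_3]$ is a nonzero multiple of $K_4$) forces $K_i(\alpha)=0$, hence $\alpha=ct+a(r)$ by transitivity on the spheres. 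For the rigidity step you rightly flag as the main content, the cleanest route is the one you hint at second: if $Y$ is a vector field on the spherical chart with $[Y,K_i]\in\mathfrak{so}(3)$ for $i=2,3,4$, then $\mathrm{ad}_Y$ is a derivation of $\mathfrak{so}(3)$, hence inner, so after subtracting an element of $\mathfrak{so}(3)$ one may assume $Y$ centralizes it; the centralizer of a locally transitive action on a connected chart injects under evaluation at a point into the fixed vectors of the isotropy representation, and the isotropy $\mathfrak{so}(2)$ acts on the tangent plane by rotation with no nonzero fixed vectors, so $Y=0$. This avoids the Fourier/ODE computation entirely. One small point worth making explicit when you "apply this fibrewise'': since the coefficients of $K_2,K_3,K_4$ are independent of $(t,r)$, the spherical part of $[X,K_i]$ is exactly $[Y,K_i]$ with $Y=\gamma\partial_\theta+\delta\partial_\varphi$ treated as a $(t,r)$-parametrized field on the sphere, so rigidity gives $Y=\sum_j\lambda_j(t,r)K_j$; the requirement that $[Y,K_i]$ have genuinely \emph{constant} coefficients (together with $\ker\mathrm{ad}_{K_i}=\langle K_i\rangle$ for two different $i$) then forces the $\lambda_j$ to be constants, which is what lets you absorb $Y$ into $\killing$.
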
 
Notice that the vector field $\partial_t$ is already included in $\g$, just set $a(r) \equiv 1$. A metric has Killing algebra containing $\killing$ if and only if it takes the form
\[ g = g_{11}(r) dt^2+g_{12}(r) dt dr+g_{22}(r) dr^2+g_{33}(r) (d\theta^2+\sin^2(\theta)d\varphi^2),\] 
where $g_{11}, g_{12}, g_{22}, g_{33}$ are functions on $(0,\infty)$. The PDE $\mathcal{E}$ determining $g$ is given by 
\begin{align*} 
 \mathcal{E}^0 &= \{u_{13}=0,u_{14}=0,u_{23}=0, u_{24}=0,u_{34}=0, u_{44} = \sin^2(\theta) u_{33}\}
 \end{align*}
on $J^0\pi$. The differential constraints of $\mathcal{E}^1 \subset J^1 \pi$ are given by the total derivatives of the constraints of $\mathcal{E}^0$ and the additional constraints $(u_{ij})_t=0,(u_{ij})_\theta=0, (u_{ij})_\varphi=0$. For each $k\geq0$ we have $\dim \E^k = 8+4k$. The restriction of the horizontal symmetric 2-form $h$ from Section \ref{sect:jets} is given by 
\[ h= u_{11} dt^2 + u_{12} dt dr+u_{22} dr^2 + u_{33} (d\theta^2 + \sin^2(\theta) d \varphi^2).\]
Each element $X \in \g$ can be lifted to a vector field $X^{(0)}$ on $\E^0$, by the requirement that $L_{X^{(0)}} h=0$. The lifts of the first three vector-fields in Proposition \ref{prop:Schwarzschild}  are given by 
\[ t \partial_t-2u_{11} \partial_{u_{11}}-u_{12} \partial_{u_{12}}, \; a \partial_t-2 a' u_{11} \partial_{u_{12}}-a' u_{12} \partial_{u_{22}}, \; b \partial_r-b'u_{12}\partial_{u_{12}}-2 b' u_{22} \partial_{u_{22}},\]
respectively, while the lifts of the elements in $\killing$ are trivial (they have no vertical components). 

As ${\mathfrak{a}(\mathfrak{r})} = \langle \partial_t \rangle$, the relative differential invariant of Theorem \ref{th:R} is given by  
 \[R^{\mathfrak{a}(\mathfrak{r})}=\|\partial_t\|_h^2=u_{11}.\]  
 \begin{prop}
Generic $\g^{(0)}$-orbits on $\E^0$ are 7-dimensional. The field of absolute invariants on $\E^0$ is generated by the absolute invariant $u_{33}$. The subset of $\E^0$ on which the orbit dimension is less than 7 is given exactly by $R^{\mathfrak{a}(\mathfrak{r})}=0$. All orbits on this 7-dimensional subset are 6-dimensional. 
 \end{prop}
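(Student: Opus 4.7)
The plan is to analyze directly the distribution $\mathcal{D}$ on $\E^0$ spanned by the pointwise values of the vector fields in $\g^{(0)}$. Since $\dim \E^0 = 8$ with coordinates $t, r, \theta, \varphi, u_{11}, u_{12}, u_{22}, u_{33}$, the goal is to locate the subset where $\dim \mathcal{D}$ drops below $7$ and compare it with $\{R^{\mathfrak{a}(\mathfrak{r})}=0\} = \{u_{11}=0\}$.

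First I would identify a convenient spanning set of $\mathcal{D}$ at a point $p \in \E^0$ with $r$-coordinate $r_0$. The rotation subalgebra gives $\partial_\theta, \partial_\varphi$ at generic points of the sphere, and the lift of $t\partial_t$ (combined with $\partial_t \in \g$, obtained by taking $a \equiv 1$) contributes the vertical vector $v_1 = -2u_{11}\partial_{u_{11}} - u_{12}\partial_{u_{12}}$. Because $a$ and $b$ are arbitrary smooth functions, the pairs $(a(r_0), a'(r_0))$ and $(b(r_0), b'(r_0))$ can be prescribed independently, yielding the horizontal directions $\partial_t, \partial_r$ together with
\[
v_2 = -2u_{11}\partial_{u_{12}} - u_{12}\partial_{u_{22}}, \qquad v_3 = -u_{12}\partial_{u_{12}} - 2u_{22}\partial_{u_{22}}.
\]

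Two observations follow immediately: none of the generators involves $\partial_{u_{33}}$, so $u_{33}$ is an absolute invariant; and, since the four horizontal vectors are independent of anything vertical, the orbit dimension equals $4$ plus the rank of $(v_1, v_2, v_3)$ inside $\mathrm{span}(\partial_{u_{11}}, \partial_{u_{12}}, \partial_{u_{22}})$. A short computation gives
\[
\det\begin{pmatrix} -2u_{11} & -u_{12} & 0 \\ 0 & -2u_{11} & -u_{12} \\ 0 & -u_{12} & -2u_{22} \end{pmatrix} = -2 u_{11}\bigl(4u_{11}u_{22} - u_{12}^2\bigr).
\]

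The main point where care is needed is the Lorentzian signature constraint built into $\E^0$. Since the spherical block $u_{33}(d\theta^2 + \sin^2\theta\, d\varphi^2)$ must be positive-definite, the $(t,r)$-block has signature $(-,+)$, which forces $4u_{11}u_{22} - u_{12}^2 < 0$ everywhere on $\E^0$. Consequently this factor is never zero, and the determinant above vanishes exactly on $\{u_{11}=0\}$. Off this hypersurface the distribution has rank $7$, so orbits are $7$-dimensional, and $u_{33}$ generates the field of absolute invariants by dimension count. On $\{u_{11}=0\}$ the Lorentzian condition forces $u_{12} \neq 0$; then $v_1, v_2$ become (nonzero) multiples of $\partial_{u_{12}}$ and $\partial_{u_{22}}$ respectively, while $v_3 = v_1 + (2u_{22}/u_{12})\,v_2$ lies in their span. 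Hence the vertical rank is exactly $2$, and all orbits on this locus are $6$-dimensional, completing all four claims.
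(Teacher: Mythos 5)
Your proof is correct and follows essentially the same route as the paper: a direct rank computation of the evaluated distribution $\g^{(0)}$ on $\E^0$, using the nondegeneracy of the $(t,r)$-block to discard the factor $4u_{11}u_{22}-u_{12}^2$ and isolate $\{u_{11}=0\}$ as the singular locus. The only difference is organizational — you reduce the full infinite-dimensional algebra to seven pointwise spanning vectors and a $3\times 3$ determinant done by hand, whereas the paper computes the rank of an $8\times 8$ matrix for a finite-dimensional subalgebra (by computer algebra) and then checks separately that the remaining generators add nothing on $\{u_{11}=0\}$.
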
 
 \begin{proof}
   It is easy to verify that $u_{33}$ is an absolute invariant. Since $\dim \E^0=8$ this implies that the dimension of $\g^{(0)}$-orbits is at most 7. To verify that this is upper bound is attained, we look at the lift of the 8-dimensional Lie subalgebra of $\g$ spanned by
   \[ \left\langle t \partial_t, \partial_t, r \partial_t, \partial_r, r\partial_r,\partial_{\varphi},   \sin(\varphi) \partial_\theta+\frac{\cos(\varphi)}{\tan(\theta)} \partial_\varphi,  \cos(\varphi) \partial_\theta-\frac{\sin(\varphi)}{\tan(\theta)} \partial_\varphi  \right\rangle.\]
Lining the generators up in an $8 \times 8$ matrix and computing the rank, shows that the generic rank is $7$, implying that the upper bound is attained by generic orbits. The rank of this matrix drops when $u_{11} u_{22} - 4 u_{12}^2=0$ or $u_{11} =0$. (These computations are well suited for computer algebra systems. We have used Maple with its PolynomialIdeal package.) The first of these equations never holds since we assume that $h$ is nondegenerate. The second equation is exactly $R^{\mathfrak{a}(\mathfrak{r})}=0$. To see that the same result is true for the infinite-dimensional Lie algebra $\g^{(0)}$, we look at the expressions for the lifted Lie algebras. When $u_{11}=0$, the vector fields of $\g^{(0)}$ have vanishing $\partial_{u_{11}}$-component and vanishing $\partial_{u_{33}}$-component, implying that the rank on the subset given by $u_{11}=0$ is never greater than 6. It is also easy to check that the rank of the $8\times 8$ matrix never drops below 6 as long as $h$ is nondegenerate. 
 \end{proof}
Notice that the absolute invariant $u_{33}$ can also be described in terms of the Lie algebra $\killing$:
\[\|\partial_{\varphi}\|_h^2 +\left\|\sin(\varphi) \partial_\theta+\frac{\cos(\varphi)}{\tan(\theta)} \partial_\varphi\right\|_h^2 + \left\|\cos(\varphi) \partial_\theta-\frac{\sin(\varphi)}{\tan(\theta)} \partial_\varphi\right\|_h^2 = 2 u_{33}.\] 
More generally, let $Y_1,Y_2,Y_3$ be a basis of the ideal $\mathfrak{so}(3)$ of $\killing$ which is orthonormal with respect to the Killing form on $\mathfrak{so}(3)$. Then $\|Y_1\|_h^2 +\|Y_2\|_h^2 +\|Y_3\|_h^2$ is proportional to $u_{33}$.

\subsubsection{The Reissner-Nordström metric}
As an example, for the  Reissner-Nordström metric 
\begin{align*}
g=-\left(1-\frac{r_s}{r} +\frac{r_Q^2}{r^2}\right) c^2 dt^2+\left(1-\frac{r_s}{r} +\frac{r_Q^2}{r^2}\right)^{-1} dr^2 +r^2 (d\theta^2+\sin^2 \theta d\phi^2)
\end{align*}
 we have \[R^{\mathfrak{a}(\mathfrak{r})}_g=\|\partial_t\|_g^2=-\left(1-\frac{r_s}{r} +\frac{r_Q^2}{r^2}\right) c^2.\] 
This function vanishes exactly on the event horizon, which is a $\killing$-invariant Killing horizon with generator $\partial_t$. For the Schwarzschild metric, we have $r_Q=0$ and $R^{\mathfrak{a}(\mathfrak{r})}_g=- (1-r_s/r)c^2$.

\subsection{Stationary and axisymmetric ($\killing=\mathbb R^2$)} \label{sect:R2}
Consider the 2-dimensional abelian Lie algebra \[ \killing = \left\langle \partial_{x^0}, \partial_{x^1}\right\rangle\] 
on $M=\mathbb R^4(x^0,x^1,x^2,x^3)$. 
\begin{prop} \label{prop:gex2}
The Lie algebra $\g$ of vector fields preserving the Lie algebra $\killing$ of Killing vectors is infinite-dimensional and spanned by the vector fields 
\[ x^0 \partial_{x^0},\quad x^0 \partial_{x^1},\quad x^1 \partial_{x^0},\quad x^1 \partial_{x^1},\quad  \sum_{i=0}^3 a^i(x^2, x^3) \partial_{x^i}, \qquad a^i \in C_{\mathrm{loc}}^\infty(\mathbb R^2).\]
\end{prop}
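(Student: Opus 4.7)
The plan is to determine $\g$ directly from its defining condition $[X,K]\in\killing$ for every $K\in\killing$. I will write a general vector field as
\[ X=\sum_{i=0}^{3} f^{i}(x^{0},x^{1},x^{2},x^{3})\,\partial_{x^{i}}, \]
and compute the two brackets with the generators of $\killing$. Since $[X,\partial_{x^{a}}] = -\sum_{i}(\partial_{x^{a}} f^{i})\,\partial_{x^{i}}$ for $a=0,1$, the membership condition $[X,\partial_{x^{a}}]\in\killing = \langle \partial_{x^{0}},\partial_{x^{1}}\rangle$ is equivalent to the pointwise conditions
\[ \partial_{x^{a}} f^{i} = 0\quad (i=2,3),\qquad \partial_{x^{a}} f^{j}\in\mathbb{R}\quad (j=0,1), \]
for $a=0,1$.

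From the first set of conditions, $f^{2}$ and $f^{3}$ are functions of $(x^{2},x^{3})$ alone; these contribute the generators $a^{2}(x^{2},x^{3})\partial_{x^{2}}+a^{3}(x^{2},x^{3})\partial_{x^{3}}$. From the second set, the partial derivatives $\partial_{x^{0}} f^{j}$ and $\partial_{x^{1}} f^{j}$ are constants (in particular independent of $x^{0},x^{1}$, and also of $x^{2},x^{3}$ because the constraint must hold identically). Integrating, each of $f^{0},f^{1}$ must be affine in $(x^{0},x^{1})$ with arbitrary dependence on $(x^{2},x^{3})$ in the constant term; that is,
\[ f^{j}(x) = \alpha^{j}_{0}\,x^{0} + \alpha^{j}_{1}\,x^{1} + a^{j}(x^{2},x^{3}),\qquad j=0,1, \]
with constants $\alpha^{j}_{k}\in\mathbb{R}$.

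Collecting the pieces, $X$ decomposes as a linear combination of the four constant-coefficient generators $x^{0}\partial_{x^{0}},\ x^{1}\partial_{x^{0}},\ x^{0}\partial_{x^{1}},\ x^{1}\partial_{x^{1}}$ together with the function-valued generator $\sum_{i=0}^{3} a^{i}(x^{2},x^{3})\,\partial_{x^{i}}$, which is the statement of the proposition. Finally, one should verify that every such $X$ indeed lies in $\g$ (a one-line check: the bracket of the affine part with $\partial_{x^{a}}$ yields a constant-coefficient linear combination of $\partial_{x^{0}},\partial_{x^{1}}$, while the bracket of the $a^{i}$-part with $\partial_{x^{a}}$ vanishes since the $a^{i}$ do not depend on $x^{0},x^{1}$), and that $\g$ is infinite-dimensional since the functions $a^{i}$ on $\mathbb{R}^{2}$ range over an infinite-dimensional space. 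The argument is essentially a straightforward integration; there is no real obstacle, the only subtlety being the observation that the partial derivatives $\partial_{x^{a}} f^{j}$ must be \emph{constant in all four coordinates} (not merely in $x^{0},x^{1}$), which follows from the fact that $\killing$ is spanned by constant-coefficient vector fields.
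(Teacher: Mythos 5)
Your computation is correct: the membership condition $[X,\partial_{x^a}]\in\killing$ for $a=0,1$ forces $\partial_{x^a}f^2=\partial_{x^a}f^3=0$ and $\partial_{x^a}f^j$ constant for $j=0,1$, and integrating gives exactly the stated generators; your emphasis that the derivatives must be constant in \emph{all} coordinates (because $\killing$ is a finite-dimensional real span of constant-coefficient fields) is the one point where care is needed, and you handle it properly. The paper states this proposition without proof, and your direct integration is the evident intended argument.
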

The general metric admitting these Killing vectors is given by 
\[
g = \sum_{i\leq j} g_{ij}(x^2, x^3) dx^i dx^j.\]
It can be thought of as the solution to the PDE system defined by 
\[\E^1 = \{(u_{ij})_{x^0} = 0, (u_{ij})_{x^1}=0 \mid 0 \leq i  \leq j \leq 3\}. \]
In this case $\E^0 = S_{\mathrm{Lor}}^2 T^* M$. The horizontal symmetric 2-form $h$ is given by 
\[ h= \sum_{i \leq j} u_{ij} dx^i dx^j.\]
Since $\killing$ is abelian, there is a priori no special Killing vectors. In fact, the Lie pseudogroup corresponding to $\g$ acts transitively on the 2-dimensional space of Killing vectors. We have $\mathfrak{a}(\mathfrak{r}) = \killing$, and the relative invariant $R^{\mathfrak{a}(\mathfrak{r})}$ of Theorem \ref{th:R} is given by 
\[ R^{\mathfrak{a}(\mathfrak{r})}=\|\partial_{x^0} \wedge \partial_{x^1}\|_h^2 = 2 u_{00}u_{11}-u_{01}^2/2. \]
\begin{prop}
    There is a (14-dimensional) open $\g^{(0)}$-orbit on $\E^0$, and thus no absolute invariants on $\E^0$. The subset of $\E^0$ on which the orbit dimension is less than 14 is a reducible algebraic set, whose 13-dimensional component is given exactly by $R^{\mathfrak{a}(\mathfrak{r})}=0$. 
\end{prop}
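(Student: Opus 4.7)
The plan is to reduce the question to fiberwise linear algebra, exploiting transitivity of $\g$ on the base $M$.

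First, the translations $\partial_{x^i}\in\g$ (obtained by taking $a^i$ constant) act transitively on $M$, so it suffices to study the action of the isotropy subalgebra $\g_0=\{X\in\g:X|_0=0\}$ on the $10$-dimensional fiber $\pi^{-1}(0)\cong\mathrm{Sym}^2_{\mathrm{Lor}}\mathbb R^4$ at the origin. Since each $X\in\g_0$ vanishes at $0$, its prolongation $X^{(0)}$ on this fiber depends only on the Jacobian $J=\partial X|_0$, and inspection of Proposition~\ref{prop:gex2} shows that this Jacobian ranges over the $12$-dimensional parabolic subalgebra $\mathfrak{p}\subset\mathfrak{gl}(4)$ of block-upper-triangular matrices $J=\bigl(\begin{smallmatrix}A & B \\ 0 & C\end{smallmatrix}\bigr)$ with $A,B,C\in M_2(\mathbb R)$; the vanishing lower-left block reflects that no generator of $\g$ has a $\partial_{x^{2,3}}$-component depending on $x^{0,1}$. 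The induced infinitesimal action on $U\in\mathrm{Sym}^2\mathbb R^4$ is $\delta U=J^\top U+U J$, so surjectivity of this map $\mathfrak p\to\mathrm{Sym}^2\mathbb R^4$ is equivalent to the $\g^{(0)}$-orbit through $(0,U)$ being $14$-dimensional.

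The key step is a block analysis. Writing $U$ in $2\times 2$ blocks $U_{ij}$, the top-left block of $\delta U$ is $A^\top U_{11}+U_{11}A$, which depends only on $A$. The map $A\mapsto A^\top U_{11}+U_{11}A$ from $M_2$ to $\mathrm{Sym}^2\mathbb R^2$ has kernel $\mathfrak{so}(U_{11})$, which is $1$-dimensional precisely when $\det U_{11}\neq 0$; hence the top-left block is surjective iff $\det U_{11}\neq 0$, equivalently $R^{\mathfrak{a}(\mathfrak{r})}=2u_{00}u_{11}-u_{01}^2/2\neq 0$. When this holds, one solves the top-right block $A^\top U_{12}+U_{11}B+U_{12}C$ uniquely for $B$ in terms of $A,C$, and the bottom-right block then reduces, after substitution, to $C^\top S+SC+(\text{terms in }A)$ with $S=U_{22}-U_{12}^\top U_{11}^{-1}U_{12}$ the Schur complement. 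Since $\det U=\det U_{11}\cdot\det S$ and $U$ is Lorentzian (so $\det U\neq 0$), $S$ is invertible, and $C\mapsto C^\top S+SC$ is surjective onto $\mathrm{Sym}^2\mathbb R^2$. Hence $\mathfrak p\to\mathrm{Sym}^2\mathbb R^4$ is surjective on $\{R^{\mathfrak{a}(\mathfrak{r})}\neq 0\}$, producing the open $14$-dimensional $\g^{(0)}$-orbit and the absence of absolute invariants. Conversely, when $R^{\mathfrak{a}(\mathfrak{r})}=0$, the top-left block alone fails to be surjective and the orbit dimension drops.

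The rank-drop locus is therefore cut out by the common vanishing of the $10\times 10$ minors of the matrix of $\mathfrak p\to\mathrm{Sym}^2\mathbb R^4$. The irreducible polynomial $R^{\mathfrak{a}(\mathfrak{r})}$ divides each such minor and cuts out the unique $13$-dimensional component. Further strata where the orbit dimension drops still more (for instance the sub-locus $\{U_{11}=0\}$ where the top-left image collapses entirely) constitute the additional lower-dimensional components that make the rank-drop locus reducible.

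The main obstacle in executing this plan rigorously is the explicit minor computation verifying that no codimension-one component other than $\{R^{\mathfrak{a}(\mathfrak{r})}=0\}$ appears and enumerating the remaining lower-dimensional components; as in Section~\ref{sect:Schwarzschild}, both tasks are well suited to a computer algebra system.
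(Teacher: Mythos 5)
Your argument is correct, and it takes a genuinely different route from the paper. The paper's proof is computational: it lifts the $16$ generators of $\g$ with polynomial coefficients of degree $\leq 1$ to $\E^0$, computes the generic rank and the vanishing locus of the maximal minors with a computer algebra system, and then separately verifies that the resulting singular set is invariant under the full (infinite-dimensional) $\g^{(0)}$. Your reduction to the fiberwise linear map $J\mapsto J^\top U+UJ$ on the block-upper-triangular Jacobians, followed by the Schur-complement analysis, replaces all of this with structural linear algebra: it uses the whole algebra $\g$ from the outset (so no a posteriori invariance check is needed), it is CAS-free, and it proves the sharper statement that the orbit through $(p,U)$ is $14$-dimensional \emph{if and only if} $\det U_{11}\neq 0$, i.e.\ iff $R^{\mathfrak{a}(\mathfrak{r})}\neq 0$. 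In particular, your closing worry is unfounded: you do not need any minor computation to rule out other codimension-one components, because your block analysis already shows the non-surjectivity locus is set-theoretically exactly $\{\det U_{11}=0\}=\{R^{\mathfrak{a}(\mathfrak{r})}=0\}$. What your approach buys is a conceptual explanation of \emph{why} $R^{\mathfrak{a}(\mathfrak{r})}$ appears (the top-left block of the deformation depends only on $A$, and $A\mapsto A^\top U_{11}+U_{11}A$ is surjective precisely off $\det U_{11}=0$); what the paper's approach buys is uniformity across all the examples in Section \ref{sect:examples}, including those where no such clean block structure is available.

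One small correction to your last paragraph: the deeper strata such as $\{U_{11}=0\}$ do not ``make the rank-drop locus reducible,'' since they are contained in $\{R^{\mathfrak{a}(\mathfrak{r})}=0\}$ (and indeed $U_{11}=0$ with $U$ nondegenerate forces split signature, so that stratum is empty on $S^2_{\mathrm{Lor}}T^*M$). As a reduced algebraic set the rank-drop locus is the irreducible quadric $\{R^{\mathfrak{a}(\mathfrak{r})}=0\}$; the reducibility asserted in the proposition, and the component $\mathcal{S}_{11}$ in the paper's proof, refer to the decomposition produced by the minors computation, with $\mathcal{S}_{11}$ sitting inside $\mathcal{S}_{13}$ as the locus where the orbit dimension drops further. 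This does not affect the substantive claim about the $13$-dimensional component, which your argument establishes completely.
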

\begin{proof}
    To show that there is a 14-dimensional open orbit on $\E^0$, take for example the 16 independent vector fields in $\g$ having polynomial coefficients of degree $\leq 1$, and lift them to $\E^0$. It is easily verified that the rank at a generic point is 14. Next, one can check that the rank of these 16 vector fields drops exactly on the algebraic subset $\mathcal{S}_{13} \cup \mathcal{S}_{11} \subset \E^0$ where $\mathcal{S}_{ 13} = \{R^{\mathfrak{a}(\mathfrak{r})}=0\}$ has  dimension 13 and $\mathcal{S}_{11}$ has  dimension 11 (we remind that $\det(h)$, which appears in these computations, never vanishes on $S^2_{\mathrm{Lor}} T^*M$). In the end we check that the set $\mathcal{S}_{13} \cup \mathcal{S}_{11}$ is $\g^{(0)}$-invariant (since we so far used only a 16-dimensional Lie subalgebra). 
\end{proof}
Thus, there is a unique $\g^{(0)}$-invariant hypersurface in $\E^0$, and it is given by $R^{\mathfrak{a}(\mathfrak{r})}=0$. 

We note that the analysis provided in this section is similar to that in \cite{ferraioli2020equivalence} where they provide a complete list of all absolute differential invariants for such spacetimes. However, our focus is on finding relative differential invariants of the lowest possible order that detect the Killing horizon.

\subsubsection{The Kerr-Newman metric}
As an example, consider the Kerr-Newman metric in coordinates $x^0=t,x^1=\varphi, x^2 = r, x^3 = \theta$  whose nonzero components are given by \cite{poisson2004}
\begin{gather*}
    g_{00} = -\frac{\Delta - a^2 \sin^2(\theta)}{\Sigma}, \qquad  g_{01} = -\frac{2a\sin^2(\theta) (r^2+a^2-\Delta)}{\Sigma} \\
    g_{11} = \frac{(r^2-a^2)^2-\Delta a^2 \sin^2(\theta)}{\Sigma} \sin^2(\theta), \qquad g_{22} = \frac{\Sigma}{\Delta}, \qquad g_{33} = \Sigma,
\end{gather*}
where $\Sigma = r^2 +a^2\cos^2(\theta)$ and $\Delta = r^2 + a^2 + e^2 -2Mr$. In this case, we have 
\[ R^{\mathfrak{a}(\mathfrak{r})}_g = -2\sin^2(\theta)(r^2-2Mr+(a^2+e^2)),\]
which vanishes exactly on the event horizon, which is well-known to be a Killing horizon. 

\subsubsection{A metric of Eichhorn and Held} \label{sect:eichhorn}
The invariant $R$ may also detect hypersurfaces that are not Killing horizons. Consider the stationary and axisymmetric spacetime given in \cite{eichhorn2021locality}: 

\begin{eqnarray}
   g &&= -\frac{r^2 - 2M r + a^2 \chi^2}{r^2 + a^2 \chi^2} du^2 + 2du dr - \frac{4 M ar}{r^2 + a^2 \chi^2}(1-\chi^2) du d\phi \nonumber \\
   && -2a(1-\chi^2) dr d\phi + \frac{r^2 + a^2 \chi^2}{1-\chi^2} d\chi^2 \\
   && + \frac{1-\chi^2}{r^2+a^2 \chi^2}\left( (a^2+r^2)^2 - a^2 (r^2 -2 Mr + a^2)(1-\chi^2) \right) d\phi^2. \nonumber 
\end{eqnarray}
Here $u,r,\phi,\chi=\cos(\theta)$ are coordinates and $M$ is a function of $r$ and $\chi$. The Lie algebra of Killing vectors is 2-dimensional: $\killing = \langle \partial_u,\partial_\phi\rangle$. The invariant $R$ restricted to $g$ takes the form 
\[R^{\mathfrak{a}(\mathfrak{r})}_g = \|\partial_\varphi \wedge \partial_u \|^2_{g}=2(1-\chi^2) (2M(r,\chi)r-a^2-r^2).\]
We have $\chi \in (-1,1)$ since $\theta \in (0, \pi)$. Therefore, the invariant vanishes if and only if  $f:=(2M(r,\chi)r-a^2-r^2) = 0$, and we use the notation $\mathcal{H} = \{f=0\}$. However, the surface $\mathcal{H}$ is not null, since the normal vector field
\begin{eqnarray}
    g^{-1}(df, \cdot)|_{\mathcal{H}} &&= \frac{2(a^2+r^2)(r \partial_r M + M-r) }{a^2\chi^2 + r^2 } \partial_u \nonumber \\ && - \frac{2( \chi^2 - 1) r \partial_{\chi} M}{a^2 \chi^2 + r^2} \partial_{\chi} + \frac{2a(r\partial_r M + M -r)}{a^2 \chi^2 + r^2} \partial_{\phi},
\end{eqnarray}
\noindent has, in general, non-vanishing norm:
\[ \| df \|_{g}^2|_{\mathcal{H}} = -\frac{4 (\chi^2 -1) r^2 (\partial_{\chi} M)^2}{a^2 \chi^2 +r^2}. \]

\subsection{Two-dimensional solvable ($\killing=\mathrm{sol}(2)$)}
Consider the Lie algebra 
\[ \killing = \langle \partial_v, v \partial_v-r\partial_r\rangle\]
on $\mathbb R^4(x^1,x^2,r,v)$. 
\begin{prop}
    The Lie algebra $\g$ of vector fields preserving the Lie algebra $\killing$ of Killing vectors are spanned by the vector fields 
    \[ \partial_v, \quad  v \partial_v-r \partial_r, \quad a_1(x) \partial_{x^1}+a_2(x)\partial_{x^2}+a_3(x)r \partial_r + a_4(x) r^{-1} \partial_v, \quad a^i \in C_{\mathrm{loc}}^\infty(\mathbb R^2).\]
\end{prop}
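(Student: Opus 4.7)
The plan is to determine $\g$ by directly imposing the defining conditions $[X,\partial_v]\in\killing$ and $[X, v\partial_v-r\partial_r]\in\killing$ on a general vector field written in coordinates as $X = f^1\partial_{x^1}+f^2\partial_{x^2}+f^r\partial_r+f^v\partial_v$, where $f^1,f^2,f^r,f^v$ are local functions of $(x^1,x^2,r,v)$. Since $\killing$ is the $\mathbb{R}$-linear span of its two generators, each bracket has to equal a constant linear combination of $\partial_v$ and $v\partial_v - r\partial_r$, say $c_1\partial_v+c_2(v\partial_v-r\partial_r)$ in the first case and $d_1\partial_v+d_2(v\partial_v-r\partial_r)$ in the second.

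First I would compute $[X,\partial_v]=-\partial_v(f^i)\partial_i$ and match components. This immediately gives $\partial_v f^1=\partial_v f^2=0$ and the ODE-style constraints $\partial_v f^r=c_2 r$, $\partial_v f^v=-c_1-c_2 v$, so that $f^r = c_2 rv+g(x^1,x^2,r)$ and $f^v = -c_1 v-\tfrac12 c_2 v^2+h(x^1,x^2,r)$. Then I would compute the four components of $[X, v\partial_v-r\partial_r]$ using the formula $[X,Y]^i = X(Y^i)-Y(X^i)$ and match each to the prescribed form. The $\partial_{x^1}$ and $\partial_{x^2}$ components force $r\partial_r f^1 = r\partial_r f^2 = 0$, eliminating $r$-dependence from $f^1,f^2$.

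The key obstacle lies in the $\partial_v$ and $\partial_r$ components: naively the $\partial_r$ equation becomes $g-r\partial_r g = d_2 r$, which in isolation would admit solutions with $r\log r$ behaviour and thus seem to enlarge $\g$. The point is that the $\partial_v$ component evaluates to $\tfrac12 c_2 v^2+h+r\partial_r h$, and equating this with $d_1+d_2 v$ forces both $c_2=0$ and $d_2=0$ by matching powers of $v$. With $c_2=0$ the $\partial_r$ equation reduces to the Euler equation $r\partial_r g = g$, whose solutions are exactly $g=c(x^1,x^2)\,r$, ruling out the spurious $r\log r$ branch. The remaining equation $r\partial_r h+h=d_1$ integrates to $h = d_1+B(x^1,x^2)/r$.

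Assembling the data, $X$ takes the form
\[
X = f^1(x)\partial_{x^1}+f^2(x)\partial_{x^2}+c(x)r\partial_r + \bigl(-c_1 v+d_1+B(x)/r\bigr)\partial_v,
\]
with $x=(x^1,x^2)$. Finally, I would rewrite $-c_1 v\partial_v = -c_1(v\partial_v-r\partial_r)-c_1 r\partial_r$ to absorb the constants $c_1,d_1$ into an element of $\killing$, and combine the remaining $-c_1 r\partial_r$ with $c(x)r\partial_r$ by redefining $a_3(x):=c(x)-c_1$. Setting $a_1:=f^1,\ a_2:=f^2,\ a_4:=B$ then gives exactly the decomposition claimed in the proposition, establishing that every $X\in\g$ is a sum of an element of $\killing$ and a vector field of the listed form, and conversely that every such vector field lies in $\g$ (which is a straightforward check on the two brackets).
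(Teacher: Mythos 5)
Your computation is correct and complete; the paper states this proposition without proof, and your direct bracket computation (imposing $[X,\partial_v]\in\killing$ and $[X,v\partial_v-r\partial_r]\in\killing$ componentwise) is exactly the natural argument. You also correctly handle the only subtle point, the potential $r\log r$ solutions of the $\partial_r$-component equation, by first extracting $d_2=0$ (as well as $c_2=0$) from the $\partial_v$-component — just note that it is $d_2=0$, not $c_2=0$, that reduces $g-r\partial_r g=d_2 r$ to the Euler equation $r\partial_r g=g$, so you should cite that constant at that step; the term $a_4(x)r^{-1}\partial_v$ is of course only defined away from $r=0$, consistent with the paper's later restriction to the subalgebra $\g_0$ with $a_4\equiv 0$.
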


The general invariant metric takes the form 
\begin{equation}
\begin{aligned}
g &= \frac{g_{33}(x)}{r^2} dr^2+ \frac{g_{i3}(x)}{r}  dx^i dr + dv\left(g_{34}(x) dr+r g_{i4}(x) dx^i+r^2 g_{44}(x) dv\right) \\ & \quad + g_{ij}(x) dx^i dx^j, 
\end{aligned}
\end{equation}
where Einstein notation will be used and the indices satisfy $i \leq j$ in the last term. If we write the horizontal symmetric form as 
\[h= \frac{u_{33}}{r^2}  dr^2+\frac{u_{i3}}{r} dx^i dr + dv\left(u_{34} dr+r u_{i4} dx^i+r^2 u_{44} dv\right) + u_{ij} dx^i dx^j, \]
then the PDE $\E$ is given by the following first-order system and its derivatives:
\[ \E^1 = \{(u_{ij})_r=0, (u_{ij})_v=0\}.\]
Let us focus on a neighborhood of $r=0$, and consider the sub-PDE $\E_0^1 \subset \E^1$ given by the additional constraints $u_{13} = u_{23}=u_{33} =0$, and their derivatives (in particular $\E_0^0=\{u_{13} = u_{23}=u_{33} =0\}$). Its general solution has the form of $g$ above, but with $g_{13}, g_{23}, g_{33}$ identically equal to zero. This is the general expression of a Near Horizon geometry (see for example \cite[Sect. 4.5]{mars2018multiple}).

As we have considered the subclass of regular spacetimes, we will consider vector-fields that are regular around $r=0$ as well. Let $\g_0\subset \g$ be the Lie algebra of vector fields defined around $r=0$, i.e., those with $a_4 \equiv 0$. 

From the two invariant ideals, $\mathfrak{a}(\mathfrak{r}) = \langle \partial_v \rangle$ and $\killing$ itself, we get from Theorem \ref{th:R} two relative differential invariants of order $0$: 
\[ R^{\mathfrak{a}(\mathfrak{r})}|_{\E^0_0}= \|\partial_v\|_h^2 = r^2 u_{44}, \qquad R^{\killing}|_{\E_0^0} = \| \partial_v \wedge (v\partial_v-r\partial_r)\|_h^2 = -\frac{r^2 u_{34}^2}{2}.\]
Due to assumed nondegeneracy, $u_{34}$ never vanishes. It is easily verified that the two relative invariants have the same weight, implying that their ratio $u_{44}/u_{34}^2$ is an absolute differential invariant. 
\begin{prop}
    Generic $\g_0^{(0)}$-orbits on $\E_0^0$ are 10-dimensional. The field of rational absolute invariants on $\E^0$ is generated by the absolute invariant $u_{44}/u_{34}^2$. The orbit dimension drops exactly on the set $\{r=0\} \subset \E_0^0$. 
\end{prop}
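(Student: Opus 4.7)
Since $R^{\mathfrak{a}(\mathfrak{r})}|_{\E_0^0} = r^2 u_{44}$ and $R^{\killing}|_{\E_0^0} = -r^2 u_{34}^2/2$ are relative invariants of equal weight, their quotient $I = u_{44}/u_{34}^2$ is a rational absolute $\g_0^{(0)}$-invariant, and it is everywhere well-defined on $\E_0^0$ because Lorentzian nondegeneracy of $h$ forces $u_{34} \neq 0$. Since $I$ is non-constant and $\dim \E_0^0 = 11$, the generic orbit dimension is immediately at most $10$. Once the matching lower bound is in hand, the field of rational absolute invariants will have transcendence degree $1$ and hence be generated by $I$.

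For the lower bound, the plan is to reduce to a finite-dimensional subalgebra that captures all pointwise directions produced by the infinite parameters $a_1, a_2, a_3$. Since the vertical part of the lift of $a_1\partial_{x^1}+a_2\partial_{x^2}+a_3 r\partial_r$ depends only on the values and first partial derivatives of the $a_i$ at the base point, it should suffice to work with
\[ \g_0' = \langle \partial_v,\; v\partial_v - r\partial_r,\; \partial_{x^i},\; x^i \partial_{x^j},\; r\partial_r,\; x^i r \partial_r\rangle, \qquad i,j \in \{1,2\},\]
eleven vector fields in total. The concrete step is then to lift these via $L_{X^{(0)}}h=0$, assemble the coefficients into an $11\times 11$ matrix on $\E_0^0$, and verify by computer algebra (in the style of Sections~\ref{sect:Schwarzschild} and~\ref{sect:R2}) that its generic rank equals $10$.

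The submanifold $\{r=0\} \subset \E_0^0$ is manifestly $\g_0^{(0)}$-invariant since every generator of $\g_0$ has its $\partial_r$-component divisible by $r$. As $I$ restricts to a non-constant function on this $10$-dimensional submanifold, every orbit lying in $\{r=0\}$ has dimension at most $9$, which gives one direction of the ``exactly on $\{r=0\}$'' claim. The reverse direction, namely that the rank is genuinely $10$ at every point with $r\neq 0$, is to be extracted from the same $10\times 10$ minor computation: the expected output is that, after discarding the unavoidable factor $\det(h)$ (nowhere zero on $\E_0^0$ by Lorentzian signature), the remaining factor cutting out the singular locus is exactly a power of $r$. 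The main obstacle is therefore this final factorisation of the minor ideal of the $11\times 11$ lifted matrix, a step of the same flavour as those carried out in the earlier examples, and the one most usefully delegated to a computer algebra system.
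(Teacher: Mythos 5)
The paper states this proposition without proof, but your argument follows exactly the strategy of the paper's proofs of the parallel propositions in Sections \ref{sect:Schwarzschild} and \ref{sect:R2}: use the absolute invariant $u_{44}/u_{34}^2$ (well-defined since nondegeneracy forces $u_{34}\neq 0$) to bound the generic orbit dimension above by $10$ in the $11$-dimensional $\E_0^0$, replace the infinite-dimensional algebra by finitely many generators whose $1$-jets at each base point span all realizable $1$-jets --- a reduction that is sound and, as you note, gives the exact pointwise span, not just a lower bound --- and locate the rank drop at $\{r=0\}$ by a computer-algebra minor computation, exactly as the paper delegates the analogous steps. The one soft spot is the final field claim: transcendence degree $1$ only shows the invariant field is algebraic over $\R(u_{44}/u_{34}^2)$ rather than literally generated by it, though the paper relies on the same implicit step in its other proofs.
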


\subsubsection{Near Horizon geometries}
For the Near Horizon geometries, the hypersurface $\mathcal{H}=\{r =0\}$ is well-known to be a Killing horizon \cite{pawlowski2004spacetimes}. In fact, it is a multiple Killing horizon since $\dim(\generators) =2$ (see \cite{mars2018multiple}):
\[ \|(v-v_0) \partial_v-r\partial_r\|_g^2 = r(v-v_0)(g_{44}(x) r(v-v_0)-g_{34}(x)).\]
This computation also shows that the spacetime is foliated by Killing horizons, given by $v=v_0$. The latter ones are not $\killing$-invariant, because $\partial_v \in \killing$, which explains why they are not detected by $R^{\mathfrak{a}(\mathfrak{r})}|_{\E_0^0}$. On the other hand, the restriction of $R^{\mathfrak{a}(\mathfrak{r})}|_{\E_0^0}$ to a  Near Horizon metric does vanish on $\mathcal{H}$, consistent with Theorem \ref{th:invariantH}.

\subsection{Spherical symmetry ($\killing=\mathfrak{so}(3)$)} \label{sect:spherical}
Consider the Lie algebra 
\[ \killing = \left\langle \partial_{\varphi}, \sin(\varphi) \partial_\theta+\frac{\cos(\varphi)}{\tan(\theta)} \partial_\varphi,\cos(\varphi) \partial_\theta-\frac{\sin(\varphi)}{\tan(\theta)} \partial_\varphi\right\rangle\]
defined on the open chart $\mathbb R \times (0,\infty) \times (0,\pi) \times(0,2\pi)$ with coordinates $t,r,\theta,\varphi$. This Lie algebra is abstractly $\mathfrak{so}(3,\mathbb R)$. Since $\killing$ is simple, the only ideal that can be used in the context of Theorem \ref{th:R} is $\killing$ itself. However, since $\killing$ is 3-dimensional while the distribution it spans in $T\E^0$ is 2-dimensional, the function $R^\killing$ is just identically vanishing. However, we can still find relative invariants by analyzing orbits on $J^0 \pi$ and $J^1 \pi$. 
\begin{prop}
    The Lie algebra $\g$ of vector fields preserving $\killing$ is given by 
\[ \g= \langle a(t,r) \partial_t + b(t,r)\partial_r \mid a,b \in C_{\mathrm{loc}}^\infty(\mathbb R \times (0,\infty)) \rangle \ltimes \killing.\] 
\end{prop}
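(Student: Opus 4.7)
The plan is to characterize $\g = \{X \in \D(M) \mid [X, K] \in \killing\ \forall K \in \killing\}$ directly: write a general vector field $X = A\partial_t + B\partial_r + C\partial_\theta + D\partial_\varphi$ with smooth coefficients $A, B, C, D$ on $M$, and impose $[X, K_i] \in \killing$ for all three generators $K_i$ of $\killing$.

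First I would extract the $\partial_t$ and $\partial_r$ components of each bracket. A direct computation shows that the $\partial_t$-coefficient of $[X, K_i]$ equals $-K_i(A)$ and the $\partial_r$-coefficient equals $-K_i(B)$. Since every element of $\killing$ has only $\partial_\theta, \partial_\varphi$ components, these must vanish, giving $K_i(A) = K_i(B) = 0$ for $i=1,2,3$. As the $K_i$ collectively span the tangent bundle of the $(\theta,\varphi)$-sphere at every chart point (the $3\times 2$ coefficient matrix has rank $2$ everywhere on $(0,\pi)\times(0,2\pi)$), this forces $A_\theta = A_\varphi = B_\theta = B_\varphi = 0$, hence $A = A(t,r)$ and $B = B(t,r)$.

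Next, the horizontal part $A(t,r)\partial_t + B(t,r)\partial_r$ commutes with each $K_i$ (since $K_i$ has no $\partial_t, \partial_r$ part and $A, B$ are independent of $\theta, \varphi$), so the remaining condition reduces to $[X_\perp, K_i] \in \killing$ with $X_\perp := C\partial_\theta + D\partial_\varphi$. Computing $[X_\perp, K_1] = -C_\varphi \partial_\theta - D_\varphi \partial_\varphi$ and requiring membership in $\killing$ pins down the $\varphi$-dependence of $C, D$ in terms of three real parameters; combining with the analogous constraints from $[X_\perp, K_2]$ and $[X_\perp, K_3]$ fixes the $\theta$-dependence as well, yielding $X_\perp = c_1(t,r) K_1 + c_2(t,r) K_2 + c_3(t,r) K_3$ for some smooth functions $c_j(t,r)$.

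The main obstacle is the final step: showing each $c_j$ is an actual constant, not merely a function of $(t,r)$. Since each $K_i$ annihilates any function of $(t,r)$, one has $[c_j(t,r) K_j, K_i] = c_j(t,r)\, [K_j, K_i]$, which a priori lies only in the $C^\infty(t,r)$-span of $\killing$ rather than in $\killing$ itself. Substituting the $\mathfrak{so}(3)$ structure constants $[K_i, K_j] = \epsilon_{ijk} K_k$ and invoking the $\R$-linear independence of $K_1, K_2, K_3$ as vector fields, the requirement that $[X_\perp, K_i]$ be an $\R$-linear combination of the $K_k$ forces each $c_j(t,r)$ to be constant. This yields the claimed decomposition. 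The reverse inclusion is immediate: $a(t,r)\partial_t + b(t,r)\partial_r$ commutes with every element of $\killing$, and $\killing$ normalizes itself.
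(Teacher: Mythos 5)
The paper states this proposition without proof, so there is no in-paper argument to compare against; judged on its own, your proposal is correct and is the natural direct verification. The reduction $K_i(A)=K_i(B)=0\Rightarrow A=A(t,r),\ B=B(t,r)$ is sound (the $K_i$ do span the $(\theta,\varphi)$-directions at every point of the chart, since $\sin\varphi$ and $\cos\varphi$ never vanish simultaneously), the splitting off of the horizontal part is legitimate because it commutes with every $K_i$, and, most importantly, your final step correctly separates the two issues that a careless argument would conflate: first that $X_\perp$ lies in the $C^\infty(t,r)$-span of $K_1,K_2,K_3$, and second that the coefficients $c_j(t,r)$ must in fact be real constants because $\killing$ is the $\R$-span, not the $C^\infty$-module, generated by the $K_j$. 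The only place the sketch could be tightened is the middle step: rather than integrating the three bracket conditions in $\varphi$ and then $\theta$, it is cleaner to observe that for each fixed $(t_0,r_0)$ the restriction $X_\perp(t_0,r_0,\cdot,\cdot)$ normalizes the $\mathfrak{so}(3)$-action on the $(\theta,\varphi)$-chart; since every derivation of $\mathfrak{so}(3)$ is inner and the centralizer of $\killing$ in vector fields on the chart is trivial (a short computation using $[\,\cdot\,,\partial_\varphi]=0$ and then $[\,\cdot\,,K_2]=0$), this restriction lies in $\killing$, giving $X_\perp=\sum_j c_j(t,r)K_j$ at once. Your constancy argument then goes through because $\mathrm{ad}\colon\mathfrak{so}(3)\to\mathrm{Der}(\mathfrak{so}(3))$ is injective and the $K_j$ are linearly independent over functions of $(t,r)$. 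A cosmetic remark: since the horizontal factor commutes with $\killing$, the semidirect product in the statement is actually a direct sum of Lie algebras.
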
 

The general metric admitting these Killing vectors is given by 
\[ g = g_{11}(t,r) dt^2+g_{12}(t,r) dt dr+g_{22}(t,r) dr^2+g_{33}(t,r) (d\theta^2+\sin^2(\theta)d\varphi^2).\] 
Writing the horizontal symmetric form $h$ as
\[ h= u_{11} dt^2 + u_{12} dt dr+u_{22} dr^2 + u_{33} (d\theta^2 + \sin^2(\theta) d \varphi^2),\]
the PDE $\E$ determining $g$ is given by 
\[\E^0 = \{u_{13}=0,u_{14}=0,u_{23}=0,u_{24}=0,u_{34}=0,u_{44} = \sin^2(\theta) u_{3,3}\}\]
on $J^0 \pi$. The differential constraints of $\E^1 \subset J^1 \pi$ are given by the total derivatives of the constraints of $\E^0$, in addition to the constraints $(u_{ij})_\theta = 0, (u_{ij})_\varphi=0$. In particular $\dim \E^0 = 8$ and $\dim \E^1 = 16$. 

\begin{prop}
    All $\g^{(0)}$-orbits on $\E^0$ are $7$-dimensional. Generic $\g^{(1)}$-orbits on $\E^1$ are 14-dimensional, and the orbit dimension drops on the subset given by $(u_{33})_t=0, (u_{33})_r=0$. The field of absolute invariants on $\E^1$ is generated by the two invariants
    \[ I=u_{33}, \qquad J=\|\bar{d}I\|_h^2 = \frac{4\left(u_{11} (u_{33})_r^2-u_{12} (u_{33})_r (u_{33})_t+u_{22} (u_{33})_t^2\right)}{4u_{11} u_{22}-u_{12}^2}. \]
\end{prop}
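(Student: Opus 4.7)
The plan is to compute the prolongations $\g^{(0)}$ and $\g^{(1)}$ explicitly and analyze their orbits, following the pattern of Section~\ref{sect:Schwarzschild}. A general element of $\g$ can be written as $X = a(t,r)\partial_t + b(t,r)\partial_r + K$ with $K \in \killing$. Since each $K \in \killing$ preserves the angular factor $d\theta^2 + \sin^2(\theta)\, d\varphi^2$ and acts trivially on functions of $t,r$, the condition $L_{K^{(0)}} h = 0$ forces no vertical correction, so $K^{(0)} = K$. The fields $a\partial_t + b\partial_r$ prolong by modifying only $u_{11}, u_{12}, u_{22}$, because they too preserve the angular factor of $h$. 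This immediately shows that $I = u_{33}$ is an absolute invariant on $\E^0$, so $\dim \g^{(0)}(\theta) \leq 7$ at every point. To prove the reverse inequality I would restrict to a finite-dimensional subalgebra with enough jet data at a generic point — for example the span of $\killing$ together with $\{t^i r^j \partial_t,\, t^i r^j \partial_r : 0 \leq i+j \leq 2\}$ — and verify with the aid of a computer algebra system, as in Section~\ref{sect:Schwarzschild}, that the rank of the resulting matrix of lifts equals $7$ at every point of $\E^0$ where $h$ is nondegenerate.

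On $\E^1$, the identity $L_{X^{(1)}} \bar d = \bar d\, L_{X^{(0)}}$ combined with $L_{X^{(1)}} h = 0$ shows that $\bar d I$ is invariantly defined, so $J = \|\bar d I\|_h^2$ is an absolute invariant; the stated rational expression follows by substituting $\bar d I = (u_{33})_t\, dt + (u_{33})_r\, dr$ into the inverse of the $(t,r)$-block of $h$. Since $J$ involves the first-order jet coordinates while $I$ does not, the two invariants are functionally independent, forcing $\dim \g^{(1)}(\theta) \leq 16 - 2 = 14$. To show the bound is attained generically, I would enlarge the subalgebra above to also contain $2$-jets of $(a,b)$ at a generic point (polynomials of degree $\leq 2$ in $(t,r)$ suffice, as their $2$-jets at any fixed $(t_0,r_0)$ fill a $6$-dimensional space), and compute the rank of the resulting $\g^{(1)}$-matrix on $\E^1$.

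The main obstacle is localizing the rank drop precisely to $\{(u_{33})_t = 0,\, (u_{33})_r = 0\}$ and not to the strictly larger cone $\{J = 0\}$ cut out by the numerator of $J$. The key structural observation is that the $(a,b)$-fields, through their first total derivatives $a_t, a_r, b_t, b_r$, act on the jet coordinates $((u_{33})_t, (u_{33})_r)$ by a linear map whose image is two-dimensional whenever $((u_{33})_t,(u_{33})_r) \neq (0,0)$ and zero otherwise; this provides the two final directions needed to reach rank $14$ on the complement of the claimed subset, regardless of the sign of $J$. Verifying this explicitly by factoring the $14 \times 14$ minors of the lifted matrix completes the orbit-dimension statement. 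The final claim about the field of absolute invariants then follows by dimension count: the generic orbits have codimension $2$ in $\E^1$ and $I, J$ are two functionally independent absolute invariants, so they generate the entire field of rational absolute invariants on $\E^1$.
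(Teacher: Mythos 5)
Your overall strategy is the same as the paper's (the paper proves the analogous propositions in Sections \ref{sect:Schwarzschild} and \ref{sect:R2} by exactly this method: lift a finite-dimensional subalgebra of $\g$ carrying enough jet data at a point, compute the rank of the resulting matrix of generators, and locate where it drops), and your identification of $I$ and $J$ as invariants, the independence argument, and the reduction of the field of invariants to a dimension count are all sound.

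However, the ``key structural observation'' in your third paragraph is internally inconsistent as stated. The four parameters $a_t,a_r,b_t,b_r$ form a copy of $\mathfrak{gl}(2)$ that acts \emph{simultaneously} on the order-zero block $(u_{11},u_{12},u_{22})$ and on the covector $\bigl((u_{33})_t,(u_{33})_r\bigr)$. You have already used $3$ of its $4$ dimensions to sweep out the nondegenerate forms, so it can contribute at most \emph{one} further independent direction in the covector plane, not two: the $1$-dimensional stabilizer $\mathfrak{so}(1,1)$ of the Lorentzian $(t,r)$-block moves a nonzero covector along the level set of its norm. If both covector directions were independently attainable on top of the form directions, $J$ would not be an absolute invariant, contradicting what you prove in your second paragraph; your count $4+3+6+2$ would give rank $15$. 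The correct bookkeeping is $4$ (base) $+\,3$ (forms) $+\,1$ (covector, via the stabilizer) $+\,6$ (second derivatives of $a,b$ acting on $(u_{ij})_t,(u_{ij})_r$ for $ij\in\{11,12,22\}$) $=14$, valid whenever $\bigl((u_{33})_t,(u_{33})_r\bigr)\neq(0,0)$ --- including on the null cone $\{J=0\}$, since a boost generator fixes no nonzero covector. This corrected count is what actually localizes the drop to the codimension-$2$ set rather than to $\{J=0\}$. Since you defer the final verification to an explicit minor computation, the proof would still go through, but the heuristic should be repaired.
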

Notice that the absolute invariant $I$ can be described in terms of the Lie algebra $\killing$ in the  same way as was done in Section \ref{sect:Schwarzschild}.

Among the invariant hypersurfaces that can be singled out by a function of the form $f(I,J)=0$, there is a special (irreducible) one, namely the one containing the $\g^{(1)}$-invariant subset $\{(u_{33})_t=0, (u_{33})_r=0\} \subset \E^1$. It is given by $J=0$ or, equivalently, by the vanishing of the relative invariant 
\[ Q =u_{11} (u_{33})_r^2-u_{12} (u_{33})_r (u_{33})_t+u_{22} (u_{33})_t^2.\]

\subsubsection{The imploding spherically symmetric metric}
Let us consider the imploding spherically symmetric metric in advanced coordinates (see \cite{ColeyMcNutt}):
\[ g= -2e^{\beta(t,r)}\left(1-\frac{2m(t,r)}{r}\right) dt^2+2e^{\beta(t,r)} dt dr+r^2  (d\theta^2+\sin^2(\theta)d\varphi^2). \]
We have $Q_g = -8 e^{\beta(t,r)} (r-2m(t,r)) r$, which vanishes on the ``future outer trapping horizon'' given by $r=2m(t,r)$. In general this horizon will not be a null hypersurface, except when $m_{,t} = 0$ when it becomes an isolated horizon. Otherwise it will be a spacelike or timelike hypersurface  \cite{bengtsson2011region}. 

\section{Discussion}

In general relativity it is helpful to determine invariantly defined hypersurfaces. These hypersurfaces give physical insight into the nature of solutions. An important example of such a hypersurface is the boundary of a black hole spacetime. For most black hole spacetimes, determining this boundary can be difficult. However, for idealized black hole spacetimes which admit a high degree of symmetry, encoded in terms of a Lie algebra $\killing$ of Killing vector fields, the boundary of the black hole spacetime is a null hypersurface known as a Killing horizon where a Killing vector field becomes null and acts as a generator of the null hypersurface. Such hypersurfaces can be characterized invariantly using the norm of the Killing vector field and using curvature invariants \cite{PageShoom2015, brooks2018cartan}.

While it is reasonable to suspect that for a given class of black hole solutions, the horizon can be detected by a relative differential invariant, such as with the stationary axisymmetric black hole solutions contained in \cite{marvan2008local, ferraioli2020equivalence}, it is less obvious that such invariants can be singled out in a systematic way without a priori knowledge of the location of the horizon. 

In this paper we focused on the Lie algebra $\g$ preserving a fixed, but general, finite-dimensional Lie algebra $\killing$ of Killing vectors of a family of spacetimes, and computed relative invariants with respect to the prolongation of $\g$ on appropriate jet bundles. We showed in Theorem \ref{th:R} that any finite-dimensional ideal of $\g$, $\mathfrak{i} = \langle K_1, \ldots, K_r \rangle$, gives rise to a relative differential invariant of order $0$: $$R^{\mathfrak{i}}= \| K_1^{(0)} \wedge \cdots \wedge K_r^{(0)}\|_{h}^2.$$ 

To properly model black hole horizons, which are located in a specific region of a spacetime, we considered Killing horizons that are invariant under the group of isometries, called $\killing$-invariant Killing horizons. We showed that there exists a particular $\g$-invariant abelian ideal $\mathfrak{i} = \mathfrak{a}(\mathfrak{r})$ of $\killing$. Theorem \ref{th:invariantH} guarantees that the corresponding relative differential invariant $R^{\mathfrak{a}(\mathfrak{r})}$ always vanishes on $\killing$-invariant Killing horizons. 

For several concrete examples of $\killing$, we also directly analyzed the $\g^{(0)}$-orbits in $J^0 \pi$ and $\g^{(1)}$-orbits in $J^1 \pi$ to successfully produce relative differential invariants of order $0$ or $1$ that vanish on horizons of several well-known black hole spacetimes. The obtained relative invariants are compared with $R^{\mathfrak{a}(\mathfrak{r})}$ in the cases where it makes sense. While this second approach is computationally more cumbersome than the first, it is also more general. For example, when the Killing algebra is $\mathfrak{so}(3)$ there are no $\killing$-invariant Killing horizons, but there is a relative differential invariant of order $1$ that detects the unique spherically symmetric apparent horizon of imploding spherically symmetric metrics.

This last example motivates the investigation of horizon detecting relative differential invariants for more general black hole solutions. The methods used in this paper could be extended to study conformal Killing horizons \cite{dyer1979conformal} by examining conformal Killing algebras. This could be applied to black hole solutions conformal to stationary black holes and give further insight into conformal Killing horizons as a valid black hole boundary for dynamical black holes \cite{mcnutt2017curvature, sherif2024existence}. More generally, for explicit classes of dynamical black hole solutions, such as the Robinson-Trautmann class of solutions \cite{podolsky2009past} or black hole spacetimes within the class of LRS spacetimes \cite{sherif2021geometric}, this approach may be able to provide insight on the appropriate boundary.

 \begin{section}*{Acknowledgments}
We are grateful to Henrik Winther, and Jerzy Lewandowski for thoughtful discussions. DM and EH were supported by the Norwegian Financial Mechanism 2014-2021 (project registration number 2019/34/H/ST1/00636), and the Tromsø Research Foundation (project “Pure Mathematics in Norway”).
 \end{section}

\bibliographystyle{unsrt-phys}
\bibliography{ref}

\begin{thebibliography}{10}
\expandafter\ifx\csname url\endcsname\relax
  \def\url#1{\texttt{#1}}\fi
\expandafter\ifx\csname doi\endcsname\relax
  \def\doi#1{\burlalt{doi:#1}{http://dx.doi.org/#1}}\fi
\expandafter\ifx\csname urlprefix\endcsname\relax\def\urlprefix{URL }\fi
\expandafter\ifx\csname href\endcsname\relax
  \def\href#1#2{#2}\fi
\expandafter\ifx\csname burlalt\endcsname\relax
  \def\burlalt#1#2{\href{#2}{#1}}\fi

\bibitem{Thornburg}
J.~{Thornburg}.
\newblock {Event and Apparent Horizon Finders for 3+1 Numerical Relativity}.
\newblock {\em Living Reviews in Relativity}, 10(1):3, 2007.
\newblock \href{http://arxiv.org/abs/gr-qc/0512169}{{arXiv:0512169 [gr-qc]}}.

\bibitem{Bishop}
N.~T. {Bishop} and L.~{Rezzolla}.
\newblock {Extraction of gravitational waves in numerical relativity}.
\newblock {\em Living Reviews in Relativity}, 19:2, 2016.
\newblock \doi{10.1007/s41114-016-0001-9}.
\newblock \href{http://arxiv.org/abs/1606.02532}{{arXiv:1606.02532 [gr-qc]}}.

\bibitem{kerr1963gravitational}
R.~P. Kerr.
\newblock Gravitational field of a spinning mass as an example of algebraically
  special metrics.
\newblock {\em Physical review letters}, 11(5):237, 1963.

\bibitem{senovilla2023ultra}
J.~M.~M. Senovilla.
\newblock Ultra-massive spacetimes.
\newblock {\em Portugaliae Mathematica}, 80(1):133--155, 2023.

\bibitem{Ashtekar}
A.~{Ashtekar}, C.~{Beetle}, and J.~{Lewandowski}.
\newblock {Geometry of generic isolated horizons}.
\newblock {\em Classical and Quantum Gravity}, 19:1195--1225, 2002.
\newblock \doi{10.1088/0264-9381/19/6/311}.
\newblock \href{http://arxiv.org/abs/gr-qc/0111067}{{arXiv:0111067 [gr-qc]}}.

\bibitem{ADA}
A.~A. {Coley}, D.~D. {McNutt}, and A.~A. {Shoom}.
\newblock Geometric horizons.
\newblock {\em Physics Letters B}, 771:131--135, 2017.

\bibitem{carminati1991algebraic}
J.~Carminati and R.~G. McLenaghan.
\newblock Algebraic invariants of the riemann tensor in a four-dimensional
  lorentzian space.
\newblock {\em Journal of mathematical physics}, 32(11):3135--3140, 1991.

\bibitem{AbdelqaderLake2015}
M.~{Abdelqader} and K.~{Lake}.
\newblock {Invariant characterization of the Kerr spacetime: Locating the
  horizon and measuring the mass and spin of rotating black holes using
  curvature invariants}.
\newblock {\em Physical Review D}, 91(8):084017, 2015.
\newblock \doi{10.1103/PhysRevD.91.084017}.
\newblock \href{http://arxiv.org/abs/1412.8757}{{arXiv:1412.8757 [gr-qc]}}.

\bibitem{PageShoom2015}
D.~N. {Page} and A.~A. {Shoom}.
\newblock {Local Invariants Vanishing on Stationary Horizons: A Diagnostic for
  Locating Black Holes}.
\newblock {\em Physical Review Letters}, 114(14):141102, 2015.
\newblock \doi{10.1103/PhysRevLett.114.141102}.
\newblock \href{http://arxiv.org/abs/1501.03510}{{arXiv:1501.03510 [gr-qc]}}.

\bibitem{kramer}
H.~Stephani, D~Kramer, M.~MacCallum, C.~Hoenselaers, and E.~Herlt.
\newblock {\em Exact Solutions of Einstein's Field Equations}.
\newblock Cambridge University Press, 2009.

\bibitem{brooks2018cartan}
D.~Brooks, P.~C. Chavy-Waddy, A.~A. Coley, A.~Forget, D.~Gregoris, M.~A.~H.
  MacCallum, and D.~D. McNutt.
\newblock Cartan invariants and event horizon detection.
\newblock {\em General Relativity and Gravitation}, 50:1--18, 2018.

\bibitem{ColeyMcNutt}
A.~Coley and D.~McNutt.
\newblock Identification of black hole horizons using scalar curvature
  invariants.
\newblock {\em Classical and Quantum Gravity}, 35(2):025013, 2018.

\bibitem{ferrando2009intrinsic}
J.~J. Ferrando and J.~A. S{\'a}ez.
\newblock An intrinsic characterization of the kerr metric.
\newblock {\em Classical and Quantum Gravity}, 26(7):075013, 2009.

\bibitem{kruglikov2021differential}
B.~Kruglikov and E.~Schneider.
\newblock Differential invariants of kundt spacetimes.
\newblock {\em Classical and Quantum Gravity}, 38:195017, 2021.

\bibitem{marvan2008local}
M.~Marvan and O.~Stolin.
\newblock On local equivalence problem of space-times with two orthogonally
  transitive commuting killing fields.
\newblock {\em Journal of mathematical physics}, 49(2):022503, 2008.

\bibitem{ferraioli2020equivalence}
D.~C. Ferraioli and M.~Marvan.
\newblock The equivalence problem for generic four-dimensional metrics with two
  commuting killing vectors.
\newblock {\em Annali di Matematica Pura ed Applicata (1923-)},
  199(4):1343--1380, 2020.

\bibitem{poisson2004}
E.~Poisson.
\newblock {\em A relativist's toolkit: the mathematics of black-hole
  mechanics}.
\newblock Cambridge university press, 2004.

\bibitem{olver1995equivalence}
P.~J. Olver.
\newblock {\em Equivalence, invariants and symmetry}.
\newblock Cambridge University Press, 1995.

\bibitem{kruglikovlychagin2007handbook}
B.~Kruglikov and V.~Lychagin.
\newblock Geometry of differential equations.
\newblock In {\em Handbook of Global Analysis}, pages 727--773. Esevier, 2007.

\bibitem{kruglikovlychagin2016global}
B.~Kruglikov and V.~Lychagin.
\newblock Global lie-tresse theorem.
\newblock {\em Sel. Math. New Ser.}, 22:1357--1411, 2016.

\bibitem{kruglikovschneider2024invariantdivisors}
B.~Kruglikov and E.~Schneider.
\newblock Invariant divisors and equivariant line bundles.
\newblock {\em arXiv preprint arXiv:2404.19439}, 2024.

\bibitem{kruglikov2019differential}
B.~Kruglikov, D.~McNutt, and E.~Schneider.
\newblock Differential invariants of kundt waves.
\newblock {\em Classical and Quantum Gravity}, 36(15):155011, 2019.

\bibitem{winter1972liealgebras}
D.~J. Winter.
\newblock {\em Abstract Lie Algebras}.
\newblock The MIT Press, 1972.

\bibitem{seligman1957characteristic}
G.~B. Seligman.
\newblock Characteristic ideals and the structure of lie algebras.
\newblock {\em Proceedings of the American Mathematical Society}, 8:159--164,
  1957.

\bibitem{mars2018multiple}
M.~Mars, T.T. Paetz, and J.~M.~M. Senovilla.
\newblock Multiple killing horizons.
\newblock {\em Classical and Quantum Gravity}, 35(15):155015, 2018.

\bibitem{kruglikovthe2014gap}
B.~Kruglikov and D.~The.
\newblock The gap phenomenon in parabolic geometries.
\newblock {\em J. reine angew. Math}, 723:153--215, 2017.

\bibitem{eichhorn2021locality}
A.~Eichhorn and A.~Held.
\newblock From a locality-principle for new physics to image features of
  regular spinning black holes with disks.
\newblock {\em Journal of Cosmology and Astroparticle Physics}, 2021(05):073,
  2021.

\bibitem{pawlowski2004spacetimes}
T.~Pawlowski, J.~Lewandowski, and J.~Jezierski.
\newblock Spacetimes foliated by killing horizons.
\newblock {\em Classical and Quantum Gravity}, 21(4):1237, 2004.

\bibitem{bengtsson2011region}
I.~Bengtsson and J.~M.~M. Senovilla.
\newblock Region with trapped surfaces in spherical symmetry, its core, and
  their boundaries.
\newblock {\em Physical Review D}, 83(4):044012, 2011.

\bibitem{dyer1979conformal}
C.~C. Dyer and E.~Honig.
\newblock Conformal killing horizons.
\newblock {\em Journal of Mathematical Physics}, 20(3):409--412, 1979.

\bibitem{mcnutt2017curvature}
D.~D. McNutt.
\newblock Curvature invariant characterization of event horizons of
  four-dimensional black holes conformal to stationary black holes.
\newblock {\em Physical Review D}, 96(10):104022, 2017.

\bibitem{sherif2024existence}
A.~M. Sherif.
\newblock On the existence of conformal killing horizons in lrs spacetimes.
\newblock {\em General Relativity and Gravitation}, 56(2):1--21, 2024.

\bibitem{podolsky2009past}
J.~Podolsk{\`y} and O.~Svitek.
\newblock Past horizons in robinson-trautman spacetimes with a cosmological
  constant.
\newblock {\em Physical Review D}, 80(12):124042, 2009.

\bibitem{sherif2021geometric}
A.~M. Sherif and P.~K.~S. Dunsby.
\newblock Geometric properties of a certain class of compact dynamical horizons
  in locally rotationally symmetric class ii spacetimes.
\newblock {\em International Journal of Geometric Methods in Modern Physics},
  18(01):2150010, 2021.

\end{thebibliography}

\end{document}